\newtheorem{theorem}{Theorem}
\newtheorem{lemma}{Lemma}
\newtheorem{proposition}{Proposition}
\DeclareMathAlphabet{\mathpzc}{OT1}{pzc}{m}{it}
\newcommand{\dps}{\displaystyle } 
\newcommand{\rme}{\mathrm{e}} 
\newcommand{\ri}{\mathrm{i}}
\newcommand{\R}{\mathbb{R}}
\newcommand{\Id}{\mathrm{Id}}
\newcommand{\cL}{\mathcal{L}}
\renewcommand{\leq}{\leqslant}
\renewcommand{\geq}{\geqslant}
\renewcommand{\le}{\leqslant}
\newcommand{\D}{d}
\newcommand{\dd}{d}
\newcommand{\cD}{\mathcal{D}}
\newcommand{\cE}{\mathcal{E}}
\newcommand{\cLxt}{\mathcal{L}_{\xi,\tau}}
\newcommand{\la}{\left\langle}
\newcommand{\ra}{\right\rangle_{L^2(\mu)}}
\newcommand{\raa}{\right\rangle_{a(\xi)}}
\newcommand{\Ltwo}{{L^2(\mu)}}
\newcommand{\sB}{\mathscr{B}}
\newcommand{\mx}{\min(\xi,\xi^{-1})}
\newcommand{\Li}{\mathcal{K}}
\newcommand{\cLham}{\mathcal{L}_{\rm ham}}
\newcommand{\cLFD}{\mathcal{L}_{\rm FD}}
\newcommand{\cLpert}{\mathcal{L}_{\rm pert}}
\newcommand{\cLovd}{\mathcal{L}_{\rm ovd}}
\newcommand{\sH}{\mathscr{H}}
\begin{document}

  
\title{Convergence rates for nonequilibrium Langevin dynamics}
\author{A. Iacobucci$^{1}$, S. Olla$^{1}$ and G. Stoltz$^{2}$ \\
{\small $^{1}$ CEREMADE, UMR-CNRS, Universit\'e Paris Dauphine, PSL} \\
{\small Place du Mar\'echal De Lattre De Tassigny, 75775 Paris Cedex 16, France} \\
{\small $^{2}$ Universit\'e Paris-Est, CERMICS (ENPC), Inria, F-77455 Marne-la-Vall\'ee, France} \\
}

\date{\today}

\maketitle

\abstract{
  We study the exponential convergence to the stationary state for nonequilibrium Langevin dynamics, by a perturbative approach based on hypocoercive techniques developed for equilibrium Langevin dynamics. The Hamiltonian and overdamped limits (corresponding respectively to frictions going to zero or infinity) are carefully investigated. In particular, the maximal magnitude of admissible perturbations are quantified as a function of the friction. Numerical results based on a Galerkin discretization of the generator of the dynamics confirm the theoretical lower bounds on the spectral gap.
}

\bigskip

\noindent
{\bf MSC:} 82C31, 35H10, 65N35


\section{Introduction}

Langevin dynamics are a commonly used model to describe the evolution of systems at constant temperature, \textit{i.e.} in contact with a heat bath at equilibrium at a given temperature. The interaction with the heat bath is modeled by a dissipative term and a random fluctuating term related by a fluctuation-dissipation relation. This dynamics is therefore a stochastic perturbation of the Hamiltonian dynamics. It can be used to sample configurations of a physical system at equilibrium, according to the canonical ensemble, which allows to numerically estimate average macroscopic properties; see for instance~\cite{LM15} for a mathematically oriented introduction to molecular dynamics. The Langevin dynamics is by now well-understood, both for its theoretical properties and its discretization, see for instance the review article~\cite{LS16}.

On the other hand, the properties of nonequilibrium Langevin dynamics, as obtained for instance by the addition of a non-gradient drift term, have been less investigated. We consider in this work the following paradigmatic nonequilibrium Langevin dynamics in a compact position space with periodic boundary conditions (see Section~\ref{sec:description} for a more precise description):
\begin{equation}
  \label{eq:Langevin}
  \left\{ \begin{aligned}
    \dd q_t & = \frac{p_t}{m} \D t, \\*
    \dd p_t & =  ( -\nabla U(q_t) + \tau F) \D t - \xi \frac{p_t}{m}  \D t 
    + \sqrt{\frac{2\xi}{\beta}}  \D W_t,
\end{aligned} \right. 
\end{equation}
where the position~$q$ belongs to~$\cD = (2\pi \mathbb{T})^d$ (with $\mathbb{T} = \mathbb{R}/\mathbb{Z}$ the one-dimensional unit torus), the momentum $p$ is in~$\mathbb{R}^d$, $W_t$ is a standard $d$-dimensional Brownian motion, and $U:\cD \to \mathbb{R}$ is the potential energy function. We assume~$U$ to be smooth in all this work, and denote by $\cE = \cD \times \mathbb{R}$ the phase-space of the system. The dynamics~\eqref{eq:Langevin} is parametrized by several constants: the mass~$m>0$ of the particles (for simplicity, we consider a single mass~$m$, although our results can be extended to account for more general mass matrices), the inverse temperature $\beta > 0$, the friction~$\xi>0$ and the magnitude of the external force~$\tau \in \mathbb{R}$, with given direction $F \in \mathbb{S}^{d-1}$ (\textit{i.e.} $F \in \mathbb{R}^d$ with $|F|=1$). Let us emphasize that the external forcing induced by $\tau F$ is indeed non-gradient since $\tau F$ is not the gradient of a smooth, periodic function. Let us also mention that our analysis could be extended to more general non-gradient forcings $F(q)$ genuinely depending on the position variable, as long as the function $F$ is sufficiently smooth.

There are two interesting limiting regimes that can be considered: (i) the limit $\xi \to 0$, which corresponds to a Hamiltonian limit; (ii) the limit $m \to 0$ or $\xi \to +\infty$ (with, in the latter case, a time rescaling by a factor $\xi$), which corresponds to an overdamped limit. More precisely, fixing for instance $m=1$ and setting $\tau = 0$, a simple proof shows that $q_{\xi t}$, the solution of~\eqref{eq:Langevin} observed at time $\xi t$, converges in law to the solution $Q_t$ of the overdamped Langevin dynamics (see for instance~\cite[Proposition~2.14]{LRS10})
\begin{equation}
  \label{eq:overdamped}
  dQ_t = -\nabla U(Q_t) \, dt + \sqrt{\frac2\beta} \, d\widetilde{W}_t.
\end{equation}

In the absence of external forcing ($\tau = 0$), the system described by~\eqref{eq:Langevin} has an {\it equilibrium} stationary measure given explicitly by the canonical Gibbs measure:
\[
\mu(dq\,dp ) = Z_\mu^{-1} \rme^{-\beta H(q,p)} \, dq \, dp, 
\qquad
H(q,p) = U(q) + \frac{p^2}{2m}, 
\qquad
Z_\mu = \int_\cE \rme^{-\beta H}.
\]
When $\tau \neq 0$, there exists a unique {\it nonequilibrium} stationary measure, but it cannot be computed explicitly. 

Although the Langevin dynamics~\eqref{eq:Langevin} is not elliptic (the noise acts only on the momenta and not directly on the positions), it can be shown to be hypoelliptic. The rate of the exponential convergence to the stationary state can be obtained by Lyapunov techniques, see for instance~\cite{MSH02,rey-bellet}. The corresponding convergence rates are however usually not very explicit in terms of the parameters of the dynamics. In particular, it is difficult to make explicit their dependence on~$\xi$. A more quantitative approach is based on estimates in $L^2(\mu)$, where exponential convergence rates for the law of the process towards~$\mu$ can be obtained by a careful use of commutator identities, as pioneered in~\cite{Talay02,EH03,HN04} and later abstracted in the theory of hypocoercivity~\cite{Villani09}. The application of this theory to Langevin dynamics allows to quantify the convergence rates in terms of the parameters of the dynamics; see for instance~\cite{HP08} for the Hamiltonian limit $\xi \to 0$ and~\cite{LMS15,LS16} for partial results on the overdamped limit $\xi \to +\infty$. A more direct route to prove the convergence was subsequently proposed in~\cite{DMS09,DMS15}, which makes it even easier to quantify convergence rates; see~\cite{DKMS13} for a complete study on the dependence of a parameter similar to~$\xi$ by this approach for a dynamics similar to an equilibrium Langevin dynamics, as well as~\cite{AAS15} for sharp estimates for equilibrium Langevin dynamics and a harmonic potential energy function. An extension to nonlinear potentials was also provided in the latter work, based on a non-symmetric Bakry--Emery condition; although further work is required to use this approach in our context because the results are stated for equilibrium dynamics with potentials~$V$ which are convex and such that $\sqrt{\max V''} - \sqrt{\min V''} \leq \xi$. Let us finally mention the recent work~\cite{EGZ17} which provides the qualitative rates of convergence for Langevin dynamics by a coupling strategy.

Our aim in this work is to investigate the exponential convergence to the stationary state for nonequilibrium Langevin dynamics. The new contributions of this work are the following:
\begin{enumerate}[(1)]
\item we provide some technical variations/improvements on the theoretical side by
\begin{enumerate}[(a)] 
\item giving explicit decay estimates for nonequilibrium Langevin dynamics, both in the Hamiltonian limit $\xi\to0$ and in the overdamped limit $\xi\to+\infty$, thereby extending the results obtained at equilibrium. Similar convergence results were very recently stated for a different nonequilibrium model and a fixed friction in~\cite{BHM16};
\item deriving hypocoercive estimates in a degenerate $H^1(\mu)$ norm, which allows to obtain lower bounds on convergence rates which are more explicit than the ones obtained with a non-degenerate $H^1(\mu)$ norm. Such degenerate norms where already considered in~\cite{Talay02}, and were also recently used in~\cite{Baudoin13,OL15,Baudoin16};
\item comparing the results obtained by either the standard hypocoercive approach followed by hypoelliptic regularization, or the direct $L^2$ approach of~\cite{DMS09,DMS15}.
\end{enumerate}
\item on the numerical side, we perform a study of the spectral gap of the generator of the dynamics (which is related to the exponential convergence of the law of the process to the invariant measure), as a function of the friction~$\xi$ and the magnitude~$\tau$ of the external forcing, in order to assess the sharpness of the bounds provided by the theoretical results. Let us emphasize that we consider a situation where the invariant measure does not have an explicit expression, so that it is difficult to rely on Monte-Carlo techniques to estimate the convergence rate as in~\cite{DKMS13}. We consider instead a Galerkin discretization of the generator of the dynamics. 
\end{enumerate}

\medskip

This article is organized as follows. We first start by describing the stationary state of the nonequilibrium Langevin dynamics we consider in Section~\ref{sec:description}. We next state in Section~\ref{sec:exp_cv} the exponential convergence to~0 of the evolution semigroup in various functional settings. Our emphasis is on carefully estimating the scaling of these rates with respect to the friction~$\xi$ and the magnitude~$\tau$ of the external forcing. The estimates we obtain provide lower bounds on the spectral gap of the Fokker--Planck operator. The relevance/sharpness of these bounds is investigated from a numerical viewpoint in Section~\ref{sec:num} by a Galerkin discretization. The longest proofs of the results presented in Section~\ref{sec:exp_cv} are postponed to Section~\ref{sec:proofs}.

\section{Stationary states of nonequilibrium Langevin dynamics}
\label{sec:description}

We make precise in this section some properties of the nonequilibrium Langevin dynamics. We first state a result on the existence and uniqueness of the steady-state, as well as (non explicit) exponential convergence results (see Section~\ref{sec:ex_uniq_inv_meas}). We next give some simple properties of the stationary state in Section~\ref{sec:qualitative_ppties}. We conclude with a perturbative expansion of the steady-state in terms of the magnitude~$\tau$ of the external forcing, making explicit the admissible magnitude of the perturbation~$\tau$ in terms of the friction~$\xi$ (see Section~\ref{sec:pert_meas}).

\subsection{Existence and uniqueness of the invariant measure}
\label{sec:ex_uniq_inv_meas}

We recall in this section a result on the existence and uniqueness of the invariant measure for any value of~$\tau$, as well as exponential convergence rates in weighted $L^\infty$ spaces (see~\cite[Proposition~5.2]{LS16}). To state this result, we denote by $\cLxt$ the generator of the dynamics~\eqref{eq:Langevin}:
\[
\cLxt = \cLham + \xi \cLFD + \tau \cLpert,
\]
where the generator associated with the Hamiltonian part of the dynamics, the fluctuation/dissipation and the external perturbation respectively read
\[
\cLham = \frac{p}{m} \cdot \nabla_q - \nabla U(q) \cdot \nabla_p, 
\qquad 
\cLFD = -\frac{p}{m}\cdot \nabla_p + \frac1\beta \Delta_p, 
\qquad
\cLpert = F\cdot \nabla_p.
\]
We also introduce the Lyapunov functions $\Li_n(q,p) = 1 + |p|^n$ for $n\geq 2$, and the functional spaces
\[
L^\infty_{\Li_n} = \left\{ \varphi \textrm{ measurable}, \ \| \varphi \|_{L^\infty_{\Li_n}} = \left\|\frac{\varphi}{\Li_n}\right\|_{L^\infty} < +\infty \right\}.
\]

\begin{proposition}
\label{prop:unif_cv_cL_eta}
Fix $\tau_* > 0$ and $\xi > 0$. For any $\tau \in [-\tau_*,\tau_*]$, the dynamics~\eqref{eq:Langevin} admits a unique invariant probability measure which admits a $C^\infty$ density $\psi_\tau(q,p)$ with respect to the Lebesgue measure. Moreover, for any $n \geq 2$, there exist $C_n,\lambda_n>0$ (depending on $\tau_*$) such that, for any $\tau \in [-\tau_*,\tau_*]$ and for any $\varphi \in L^\infty_{\Li_n}(\cE)$,
\[
\forall t \geq 0, \qquad \left\| \rme^{t \cL_\tau}\varphi - \int_\cE \varphi \,\psi_\tau \right\|_{L^\infty_{\Li_n}} \leq C_n \rme^{-\lambda_n t} \left\| \varphi \right\|_{L^\infty_{\Li_n}}.
\] 
\end{proposition}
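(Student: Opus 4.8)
The plan is to establish this by a Lyapunov/Doeblin argument in the spirit of Meyn--Tweedie and Hairer--Mattingly, following the route indicated by the reference to \cite[Proposition~5.2]{LS16}. The two ingredients are (i) a uniform-in-$\tau$ Lyapunov condition of the form $\cL_\tau \Li_n \leq -a_n \Li_n + b_n \ind_{K}$ for some compact set $K \subset \cE$ and constants $a_n, b_n > 0$ independent of $\tau \in [-\tau_*,\tau_*]$, and (ii) a uniform minorization (small-set) condition on $K$. Together with smoothness of the transition kernel (from hypoellipticity), these give existence and uniqueness of $\psi_\tau$ and geometric convergence in the weighted $L^\infty_{\Li_n}$ norm with rates that can be chosen uniformly over the compact $\tau$-interval.

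The first step is to verify the Lyapunov condition. One computes $\cLxt \Li_n$ explicitly: since $\Li_n = 1 + |p|^n$ depends only on $p$, the transport term $\cLham$ acting on $\Li_n$ only sees the $-\nabla U(q)\cdot\nabla_p$ piece, giving a term bounded by $C|p|^{n-1}$ because $\nabla U$ is bounded on the compact torus; the perturbation $\tau \cLpert \Li_n = \tau F \cdot \nabla_p \Li_n = n\tau |p|^{n-2} p\cdot F$ is bounded by $n\tau_* |p|^{n-1}$, again uniformly in $\tau$; and the dominant contribution comes from $\xi \cLFD \Li_n$, which equals $-\frac{n\xi}{m}|p|^n + \frac{\xi}{\beta}\bigl(n(n-2)|p|^{n-4}|p|^2 + n d |p|^{n-2}\bigr)$ up to the precise combinatorial constants, i.e. a leading dissipative term $-\frac{n\xi}{m}|p|^n$ plus lower-order terms in $|p|$. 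Collecting everything, $\cLxt \Li_n \leq -\frac{n\xi}{2m}|p|^n + C_n'$ for $|p|$ large enough, which yields the drift inequality with $a_n, b_n, K$ independent of $\tau$. Note one should fix $\xi$ here (as in the statement), since the constants degenerate as $\xi\to 0$ or $\xi\to+\infty$; the careful $\xi$-dependence is precisely what the rest of the paper addresses.

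The second step is the minorization condition on the compact set $K$: there exists a probability measure $\nu$ and $\alpha > 0$ such that the time-one transition kernel satisfies $P^\tau_1(x,\cdot) \geq \alpha\, \nu(\cdot)$ for all $x \in K$, uniformly in $\tau \in [-\tau_*,\tau_*]$. This follows from hypoellipticity of $\cLxt$ (the generator satisfies H\"ormander's bracket condition since $[\cLham,\partial_{p_i}]$ generates the $\partial_{q_i}$ directions), which gives a smooth, everywhere-positive transition density; continuity and compactness then upgrade this to a uniform lower bound on $K$. Uniformity in $\tau$ comes from the joint continuity of the density in $(t,x,y,\tau)$, which itself follows from the smooth dependence of the coefficients on $\tau$. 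The main obstacle is really this uniformity bookkeeping — making sure all constants ($a_n$, $b_n$, $\alpha$, the size of $K$) can be chosen independently of $\tau$ over the compact interval $[-\tau_*,\tau_*]$ — rather than any single estimate; each estimate is routine but the uniform packaging requires care.

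The final step is to invoke the standard Harris-type theorem (e.g. Hairer--Mattingly, or the weighted Meyn--Tweedie framework as used in \cite{rey-bellet}) that converts a Lyapunov drift condition plus a minorization on the sublevel sets into existence and uniqueness of the invariant measure together with geometric ergodicity in the weighted supremum norm $\|\cdot\|_{L^\infty_{\Li_n}}$. Because the drift and minorization constants are uniform over $\tau \in [-\tau_*,\tau_*]$, the resulting constants $C_n$ and $\lambda_n$ can likewise be chosen uniformly, which is exactly the claimed statement. Smoothness of $\psi_\tau$ follows from elliptic (hypoelliptic) regularity applied to the stationary Fokker--Planck equation $\cLxt^* \psi_\tau = 0$, since $\cLxt^*$ is hypoelliptic.
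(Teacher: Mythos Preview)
Your proposal is correct and follows exactly the approach the paper indicates: the paper does not give its own proof of this proposition but simply recalls it from \cite[Proposition~5.2]{LS16}, noting only that ``the proof relies on minorization conditions for which the dependence on the parameters is not very explicit.'' Your Lyapunov drift plus uniform minorization (Harris/Meyn--Tweedie) sketch is precisely that route, with the same observation that the constants are not explicit in $\xi$ and $\tau$.
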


Let us however emphasize that it is difficult to quantify the above convergence rates in terms of the parameters of the dynamics (such as the friction $\xi$ or the potential~$U$) and the magnitude of the external forcing~$\tau$, since the proof relies on minorization conditions for which the dependence on the parameters is not very explicit. 

In order to write more explicit convergence results, it will be convenient to work in (subspaces) of $L^2(\mu)$, see Section~\ref{sec:exp_cv}. In the sequel, we consider by default all operators as defined (by their extensions) on the Hilbert space $L^2(\mu)$, the adjoint~$A^*$ of a (closable) operator~$A$ being defined with respect to the associated canonical scalar product. Working in $L^2(\mu)$ however requires that the invariant measure itself admits a density $h_\tau$ with respect to~$\mu$ which is in $L^2(\mu)$, \textit{i.e.} 
\begin{equation}
\label{eq:psi_tau_h_tau}
\psi_\tau = h_\tau \mu, \qquad h_\tau \in \Ltwo. 
\end{equation}
Such a result is provided by perturbation results for linear operators, see Section~\ref{sec:pert_meas}. In this setting, the invariance of the measure $\psi_\tau$ can be translated into the following Fokker-Planck equation:
\begin{equation}
  \label{eq:FP_L2}
  \cLxt^* h_\tau = 0, \qquad \int_\cE h_\tau \, d\mu = 1,
\end{equation}
with $h_\tau \geq 0$. More explicitly, the action of $\cLxt^*$, the adjoint of $\cLxt$ considered as an operator on~$L^2(\mu)$, is given by
\begin{equation}
  \label{eq:16}
  \cLxt^* = - \cLham + \xi \cLFD - \tau F\cdot \nabla_p + \frac{\tau\beta}{m} F\cdot p.
\end{equation}

\subsection{Qualitative properties of the steady-state}
\label{sec:qualitative_ppties}

We provide here some simple properties of the stationary state. The first one is that, due to the constant force applied on the momenta, a non-zero velocity builds up in the system. More precisely, denoting by $\mathbb{E}_\tau$ the expectation with respect to~$\psi_\tau$, this average velocity~$v_\tau$ satisfies
\begin{equation}
  \label{eq:v_tau}
  v_\tau = \mathbb{E}_\tau\left(\frac{p}{m}\right) = \frac1\xi\Big( \tau F - \mathbb{E}_\tau(\nabla U) \Big).
\end{equation}
This relation follows from the following identities (the first equality arising from the invariance of $\psi_\tau$ through the nonequilibrium Langevin dynamics~\eqref{eq:Langevin}):
\[
\int_\cE \left(\cLxt p \right) \psi_\tau = 0 = - \xi \int_\cE \frac{p}{m} \, \psi_\tau - \int_\cE \nabla U \, \psi_\tau + \tau F.
\]

Another relation is obtained by computing the average of $\cLxt \mathcal H$ with respect to~$\psi_\tau$, where $\mathcal{H}$ is the Hamiltonian of the system:
\[
\mathcal{H}(q,p) = \frac{p^2}{2m} + U(q).
\] 
Since 
\[
\cLxt \mathcal{H}(q,p) = \xi \left(-\frac{p^2}{m^2} + \frac{1}{\beta m}\right) + \tau F \cdot \frac{p}{m},
\]
it follows that
\[
\tau F\cdot v_\tau = - \frac{\xi}{m} \left(\frac1\beta - \mathbb{E}_\tau\left[\frac{p^2}{m}\right] \right). 
\]
The above identity expresses the energy conservation: the work performed by the non-conservative force $\tau F$ is equal to the heat, \textit{i.e.} the energy flow into the heat bath. 

Let us now turn to the entropy production rate of the stationary state, which is proportional to $\tau F \cdot v_\tau$ by the above discussion. By energy conservation of the system and its environment, the heat flow from the thermostat equals the instantaneous work performed by the driving force. We therefore study the latter quantity. We consider to this end $\cLxt (\ln h_\tau)$ (where $h_\tau$ is defined in~\eqref{eq:psi_tau_h_tau}). The following computations are formal but could be made rigorous upon obtaining appropriate estimates on $\nabla_p h_\tau$. Since
\[
h_\tau \cLxt \left( \ln h_\tau \right) = \cLxt h_\tau - \frac {\xi}{\beta} \frac{|\nabla_p h_\tau|^2}{h_\tau},
\]
and
\[
\int_\cE \left(\cLxt h_\tau\right) d\mu = \int_\cE \left(\cLxt^*\mathbf{1}\right) h_\tau \, d\mu = \tau\beta F \cdot \int_\cE \frac{p}{m} h_\tau \, d\mu = {\tau\beta} F \cdot v_\tau,
\]
it follows, after integration with respect to~$\mu$ and in view of~\eqref{eq:FP_L2}, 
\[
\int_\cE h_\tau \cLxt \left( \ln h_\tau \right) d\mu = \int_\cE \left(\cLxt^*h_\tau\right) \ln h_\tau \, d\mu= 0 = {\tau\beta} F \cdot v_\tau - \frac{\xi}{\beta} \int \frac{|\nabla_p h_\tau|^2}{h_\tau} d\mu.
\]
This shows that the entropy production rate is positive. 
Using~\eqref{eq:v_tau}, we also obtain the following bound on a degenerate Fisher information:
\[
\int_\cE \frac{|\nabla_p h_\tau|^2}{h_\tau} \, d\mu = \frac{\beta^2 \tau}{\xi} F \cdot v_\tau
= \left(\frac{\beta \tau}{\xi}\right)^2 - \frac{\beta^2\tau}{\xi^2} F\cdot  \mathbb{E}_\tau(\nabla U)
\le \left(\frac{\beta \tau}{\xi}\right)^2+ \frac{\beta^2\tau}{\xi^2} \|\nabla U\|_\infty,
\]
which provides some information on the distance of the stationary distribution of the momenta to the equilibrium Gaussian distribution with variance $m/\beta$.

A last information on the stationary state is provided by the Einstein relation, first proven in~\cite{Rodenhausen89}, which provides a linear response result on the average velocity~$v_\tau$:
\begin{equation}
  \label{eq:ER}
  \frac d{d\tau} v_\tau \Big|_{\tau = 0} = \beta DF,
\end{equation}
where $D \in \mathbb{R}^{d \times d}$ is the diffusivity of the system at $\tau = 0$ \textit{i.e.} the asymptotic variance:
\begin{equation}
  \label{eq:diffusivity}
  D := \lim_{t\to\infty} \frac{1}{2t} \mathbb{E}_0\left[ \left(\int_0^t \frac{p_s}{m} \, ds\right) \otimes \left(\int_0^t \frac{p_s}{m} \, ds\right) \right],
\end{equation}
with $\mathbb{E}_0$ the expectation over all initial conditions~$(q_0,p_0) \sim \mu$ and all realizations of the Brownian motion~$W_t$. This relation also follows from the first order term in the expansion of $h_\tau$ recalled in Section~\ref{sec:pert_meas}.

\subsection{Perturbative expansion of the invariant measure}
\label{sec:pert_meas}

We prove in this section that the decomposition~\eqref{eq:psi_tau_h_tau} is well defined, and actually give a complete characterization of $h_\tau$ as some series expansion for $\tau$ sufficiently small. Define the subspace of $L^2(\mu)$ composed of functions with average~0 with respect to~$\mu$:
\[
L^2_0(\mu) = \left\{ f \in L^2(\mu), \ \int_\cE f \, d\mu = 0 \right\}.
\]
We also denote by $\sB(X)$ the operator norm on a Banach space~$X$.

The operator $\cL_{\xi,0}$ is invertible on $L^2_0(\mu)$. This can be seen as a particular case of the convergence results provided in Section~\ref{sec:exp_cv} (for $\tau = 0$). In fact, it can be proved that there exists $K > 0$ such that
\[
\forall \xi \in (0,+\infty), \qquad \left\| \cL_{\xi,0}^{-1} \right\|_{\sB(L^2_0(\mu))} \leq \frac{K}{\mx}.
\]
see also~\cite{HP08,LMS15,LS16}. In addition, recalling that $|F|=1$, we obtain, for a smooth and compactly supported function $\varphi \in L^2_0(\mu)$,
\[
\left\| \cL_{\rm pert} \varphi \right\|^2_{L^2(\psi_0)} \leq \| \nabla_p \varphi \|_\Ltwo^2 = -\frac{\beta}{\xi} \la \varphi, \cL_{\xi,0}\varphi \ra \leq \frac{\beta}{\xi} \| \cL_{\xi,0} \varphi \|_{L^2(\psi_0)} \| \varphi \|_{L^2(\psi_0)}.
\]
We next project $\cL_{\rm pert} \varphi$ onto $L^2_0(\mu)$, replace $\varphi$ by $\cL_{\xi,0}^{-1} \phi$ and take the supremum over $\phi \in L^2_0(\mu)$ to obtain a bound on the operator $\cL_{\rm pert} \cL_{\xi,0}^{-1}$ considered as an operator on $L^2_0(\mu)$:
\[
\left\| \cL_{\rm pert} \cL_{\xi,0}^{-1} \right\|_{\sB(L^2_0(\mu))} \leq \sqrt{\frac{\beta}{\xi} \left\| \cL_{\xi,0}^{-1} \right\|_{\sB(L^2_0(\mu))}} \leq \frac{\sqrt{\beta K}}{\min(1,\xi)}. 
\]  
Let us denote by $r$ the spectral radius of the operator appearing in the left hand-side of the previous inequality, namely
\[
r = \lim_{n \to +\infty}  \left\| \left[ \left(\cL_{\rm pert} \cL_{\xi,0}^{-1} \right)^* \right]^n \right\|_{\sB(L^2_0(\mu))}^{1/n},
\] 
The following result (given by~\cite[Theorem~5.2]{LS16}) provides an expression of $h_\tau$.

\begin{proposition}
  \label{thm:expression_f_xi}
 For $|\tau| < r^{-1}$, the unique invariant measure can be written as $\psi_\tau = h_\tau\mu$, where 
 $h_\tau \in L^2(\mu)$ admits the following expansion in powers of~$\tau$:
  \begin{equation}
    \label{eq:expansion_psi_xi_general}
    h_\tau =  \left( 1+\tau \left(\cL_{\rm pert} \cL_{\xi,0}^{-1}\right)^*  \right)^{-1} \mathbf{1} 
    = \left( 1 + \sum_{n=1}^{+\infty} (-\tau)^n \left[ \left(\cL_{\rm pert} \cL_{\xi,0}^{-1} \right)^* \right]^n \right) \mathbf{1}.
  \end{equation}
\end{proposition}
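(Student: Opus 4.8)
The plan is to construct $h_\tau$ explicitly as the Neumann series on the right-hand side of~\eqref{eq:expansion_psi_xi_general}, to check that $h_\tau\,\mu$ satisfies the stationary Fokker--Planck equation~\eqref{eq:FP_L2}, and finally to identify it with the invariant probability measure of Proposition~\ref{prop:unif_cv_cL_eta}. Write $A := \cLpert\,\cL_{\xi,0}^{-1}$, with the convention that $\cL_{\xi,0}^{-1}$ inverts $\cL_{\xi,0}$ on $L^2_0(\mu)$ and annihilates constants; the discussion preceding the statement shows that $A$ extends to a bounded operator on $L^2_0(\mu)$, hence so does its adjoint, and, $r$ being the associated limit (i.e.\ the spectral radius of $A^*$), the series $\sum_{n\ge0}(-\tau)^n(A^*)^n$ converges in operator norm for $|\tau| < r^{-1}$, giving the bounded inverse of $1+\tau A^*$. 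Since $\cL_{\xi,0}\ind = 0$ we have $A\ind = 0$, so $A^*$ maps $\Ltwo$ into $L^2_0(\mu)$; in particular $A^*\ind = \big(\cL_{\xi,0}^{-1}\big)^*\cLpert^*\ind = \big(\cL_{\xi,0}^{-1}\big)^*\!\big(\frac{\beta}{m}F\cdot p\big)$ is a well-defined element of $L^2_0(\mu)$, using~\eqref{eq:16} to evaluate $\cLpert^*\ind$. I then \emph{define}
\[
  h_\tau := \big(1+\tau A^*\big)^{-1}\ind = \ind + \sum_{n\ge 1}(-\tau)^n (A^*)^n\ind \ \in\ \Ltwo ,
\]
which is precisely~\eqref{eq:expansion_psi_xi_general}; since every term with $n\ge1$ lies in $L^2_0(\mu)$, one gets $\int_\cE h_\tau\,\dd\mu = 1$.

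Next I would check that $h_\tau\,\mu$ is stationary, i.e.\ $\int_\cE (\cLxt\varphi)\,h_\tau\,\dd\mu = 0$ for every $\varphi$ in a core for $\cLxt$. Since $\cLxt\ind = 0$, one may take $\varphi\in L^2_0(\mu)$; then $\varphi = \cL_{\xi,0}^{-1}\phi$ with $\phi := \cL_{\xi,0}\varphi\in L^2_0(\mu)$, and, using $\cLxt = \cL_{\xi,0}+\tau\cLpert$ together with $\cLpert\varphi = \cLpert\,\cL_{\xi,0}^{-1}\phi = A\phi$,
\[
  \int_\cE (\cLxt\varphi)\,h_\tau\,\dd\mu = \la \phi + \tau A\phi,\, h_\tau \ra = \la \phi,\,\big(1+\tau A^*\big)h_\tau \ra = \la \phi,\,\ind \ra = 0 ,
\]
the last equality because $\phi$ has zero $\mu$-average. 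This is exactly $\cLxt^* h_\tau = 0$ in~\eqref{eq:FP_L2}; running the same computation for an arbitrary $h\in\Ltwo$ with $\cLxt^* h = 0$ and using the invertibility of $1+\tau A^*$ shows moreover that $\ker(\cLxt^*)\cap\Ltwo = \R\,h_\tau$, so $h_\tau$ is, up to normalization, the only possible density. Note that this argument only involves $\cLpert$ applied to smooth functions and the \emph{bounded} operators $A$ and $A^*$, so it sidesteps any domain subtlety related to the unbounded operator $\cLpert^*$.

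The last step, which I expect to be the main obstacle, is to identify the weakly stationary finite signed measure $h_\tau\,\mu$ of total mass~$1$ with the invariant probability measure $\psi_\tau$ of Proposition~\ref{prop:unif_cv_cL_eta} --- equivalently, to show that the density $\psi_\tau/\mu$ lies in $\Ltwo$, for it then belongs to the one-dimensional space $\ker(\cLxt^*)\cap\Ltwo = \R\,h_\tau$ with the same normalization and hence equals $h_\tau$. The algebraic part above is short once the boundedness of $\cLpert\,\cL_{\xi,0}^{-1}$ recalled before the statement is granted; the genuine work is in establishing this absolute continuity, which one can do in two ways. One may invoke analytic perturbation theory: $\cLpert^*$ is $\cL_{\xi,0}^*$-bounded (this is the boundedness of $A$), and $0$ is a simple eigenvalue of $\cL_{\xi,0}^*$ isolated in its spectrum --- a consequence of the $\tau=0$ case of the convergence estimates of Section~\ref{sec:exp_cv} --- so that for small $\tau$ the operator $\cLxt^*$ still has a simple isolated eigenvalue at $0$ with eigenspace $\R\,h_\tau$ and spectral projection $f\mapsto\big(\int_\cE f\,\dd\mu\big)h_\tau$; the long-time limit of the law of the process, which by Proposition~\ref{prop:unif_cv_cL_eta} is $\psi_\tau$, then has density $h_\tau$ with respect to~$\mu$. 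Alternatively, and for the whole range $|\tau|<r^{-1}$, one upgrades the $L^2(\mu)$-weak relation $\cLxt^* h_\tau=0$ to genuine stationarity under the Markov semigroup via hypoelliptic regularity of $h_\tau$, and then uses the Harris ergodicity of the dynamics (which forces any finite signed stationary measure to be a scalar multiple of $\psi_\tau$). Granting this identification, the expansion~\eqref{eq:expansion_psi_xi_general} is exactly the Neumann series of the first step, and the proof is complete.
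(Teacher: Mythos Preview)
The paper does not prove this proposition; it simply quotes it from~\cite[Theorem~5.2]{LS16}. Your proposal is a correct and essentially complete sketch of the standard argument behind that result. The algebraic part --- defining $h_\tau$ by the Neumann series, verifying weak stationarity via the identity $\cLxt\varphi = (1+\tau A)\,\cL_{\xi,0}\varphi$ on zero-mean test functions, and deducing $\ker(\cLxt^*)\cap\Ltwo = \R\,h_\tau$ from the invertibility of $1+\tau A^*$ --- is clean. One minor quibble: ``since $\cL_{\xi,0}\ind=0$ we have $A\ind=0$'' is not quite the right justification; what gives $A\ind=0$ is your \emph{convention} that $\cL_{\xi,0}^{-1}$ annihilates constants, which is logically independent of $\ind\in\ker\cL_{\xi,0}$. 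You also rightly flag the identification $h_\tau\,\mu=\psi_\tau$ as the only substantive step, and both routes you sketch (Kato-type perturbation of the simple isolated eigenvalue~$0$ of $\cL_{\xi,0}^*$, or hypoelliptic regularity of $h_\tau$ followed by the uniqueness of the invariant measure from Proposition~\ref{prop:unif_cv_cL_eta}) are valid ways to close it; the second has the advantage of covering the full range $|\tau|<r^{-1}$ directly.
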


Since $0 \leq r \leq \left\| \cL_{\rm pert} \cL_{\xi,0}^{-1} \right\|_{\sB(L^2_0(\mu))}$, it holds
\begin{equation}
  \label{eq:scaling_r_xi}
  \frac{1}{r} \geq \frac{\min(1,\xi)}{\sqrt{\beta K}}.
\end{equation}
This shows that the upper limit on~$\tau$ for the validity of the expansion~\eqref{eq:expansion_psi_xi_general} should be of order $\min(1,\xi)$. This is consistent with physical intuition: when the fluctuation/dissipation is small, the external forcings which can be sustained by the system are at most of the same order of magnitude than the dissipation mechanism; while for large fluctuation/dissipation mechanisms, external forcings of the same order of magnitude as $\nabla U$ can be sustained. Let us mention that it is not clear whether the condition~\eqref{eq:scaling_r_xi} really is a necessary one. For extremely large forcings~$\tau$, we expect that the invariant measure will be quite different from~$\mu$, so that it is not a surprise that the perturbative approach of Proposition~\ref{thm:expression_f_xi} fails as such.

As an application of the power expansion provided in Proposition~\ref{thm:expression_f_xi}, let us recall one way to obtain \eqref{eq:ER}-\eqref{eq:diffusivity}. Note first that~\eqref{eq:diffusivity} can be rewritten as
\[
D = \int_0^{+\infty} \mathbb{E}_0\left[\frac{p_t}{m} \otimes \frac{p_0}{m}\right] \, dt.
\]
Next, in view of~\eqref{eq:expansion_psi_xi_general} and using the following equality in the sense of bounded operators on $L^2_0(\mu)$ (relying for instance on the convergence results of Section~\ref{sec:exp_cv} with $\tau = 0$, which ensure that the time integral is well defined):
\[
-\cL_{\xi,0}^{-1} = \int_0^{+\infty} \rme^{t \cL_{\xi,0}} \, dt,
\]
it follows that
\[
\frac d{d\tau} v_\tau \Big|_{\tau = 0} = -\int_\cE \frac{p}{m} \left(\cL_{\xi,0}^*\right)^{-1} \cL_{\rm pert}^* \mathbf{1} \, d\mu = -\beta\int_\cE \cL_{\xi,0}^{-1}\left(\frac{p}{m}\right) \frac{p}{m} \cdot F \, d\mu = \beta\int_0^{+\infty} \mathbb{E}_0\left[\frac{p_t}{m} \left(\frac{p_0}{m}\cdot F\right)\right] dt.
\]
This is indeed~\eqref{eq:ER}.

\section{Exponential convergence of the law}
\label{sec:exp_cv}

From the convergence result provided in Proposition~\ref{prop:unif_cv_cL_eta} and the stationary Fokker-Planck equation~\eqref{eq:FP_L2}, it is expected that $\rme^{t \cLxt^*}f$ converges to $h_\tau$ for any initial density $f \in L^2(\mu)$ such that 
\[
\int_\cE f \, d\mu = 1, \qquad f \geq 0.
\]
We state several such results in this section: the first two are based on the hypocoercive approach presented in~\cite{Villani09} and used also in~\cite{HP08} for equilibrium Langevin dynamics (see Section~\ref{sec:std_hypoc} for a result in a degenerate $H^1(\mu)$ norm and the related convergence in $L^2(\mu)$ after hypoelliptic regularization), while the third one follows from the more direct approach of~\cite{DMS09,DMS15} (see Section~\ref{sec:direct_L2}). The main point is that the latter approach is the only one which allows to state exponential convergence results consistent with the upper bound~\eqref{eq:scaling_r_xi} on the series expansion of the invariant measure, while the more traditional hypocoercive approach is limited to small forcings $\tau = \mathrm{O}(\xi^{-1})$ for large frictions. 

Although the convergence results are stated here for probability densities, they can in fact be obtained for general elements of $L^2(\mu)$ (\textit{i.e.} the functions under consideration need not be non-negative and be of integral~1 with respect to~$\mu$). We write them however for elements of $L^2(\mu)$ of mass~1, introducing to this end
\[
L^2_1(\mu) = \left\{ f \in L^2(\mu), \ \int_\cE f \, d\mu = 1 \right\}.
\]
Convergence results for arbitrary elements of $L^2(\mu)$ follow by an appropriate renormalization.

\subsection{Standard hypocoercive approach}
\label{sec:std_hypoc}

We start by stating a convergence result in a norm finer than the $L^2(\mu)$ norm, but coarser than the $H^1(\mu)$ norm usually considered in the theory of hypocoercivity. Instances of such degenerate norms can be found in~\cite{Talay02,Baudoin13,OL15,Baudoin16}. More precisely, we consider the Hilbert space
\[
\mathcal{H} = \left\{ f \in L^2(\mu), \ (\nabla_p+\nabla_q)f \in L^2(\mu) \right\},
\]
endowed with the scalar product (for $a>0$)
\[
\la f,g \right\rangle_a = \la f,g \ra + a \la (\nabla_p+\nabla_q)f, (\nabla_p+\nabla_q)g \ra.
\]
The associated scalar product is denoted by $\|\cdot\|_a$. The precise convergence result is the following.

\begin{theorem}
  \label{thm:hypo_H1_like}
  There exist $\delta_* > 0$ and a continuous function $a:(0,+\infty) \to (0,+\infty)$ satisfying
  \[
  \lim_{\xi \to 0} \frac{a(\xi)}{\xi} = \overline{a}_0 > 0, 
  \qquad
  \lim_{\xi \to +\infty} \xi a(\xi) = \overline{a}_\infty > 0, 
  \]
  such that, for any $\delta \in [0,\delta^*]$, there is $\overline{\lambda}_\delta>0$ for which, for all $\xi \in (0,+\infty)$ and $\tau \in [-\delta \mx, \delta \mx]$,
  \begin{equation}
    \label{eq:cv_H1_like}
    \forall f \in \mathcal{H} \cap L^2_1(\mu), \ \forall t \geq 0, 
    \qquad 
    \left\| \rme^{t \cLxt^*}f - h_\tau \right\|_{a(\xi)} \leq \rme^{-\overline{\lambda}_\delta \mx t} \| f - h_\tau \|_{a(\xi)}.
  \end{equation}
  Moreover, $\overline{\lambda}_\delta = \overline{\lambda}_0 + \mathrm{O}(\delta)$. 
\end{theorem}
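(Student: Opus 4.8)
The plan is to follow the classical hypocoercivity scheme of Villani, adapted to the degenerate norm $\|\cdot\|_{a}$ involving only the combined derivative $G := (\nabla_p + \nabla_q)$, and to track carefully the $\xi$-dependence of all constants so that the contraction rate comes out proportional to $\mx$. First I would reduce to the case $\tau$ general but $h_\tau$ known to exist and lie in $\mathcal{H}$ (this follows from Proposition~\ref{thm:expression_f_xi} once $|\tau| < r^{-1}$, and the smallness assumption $|\tau| \le \delta \mx$ together with~\eqref{eq:scaling_r_xi} guarantees this for $\delta$ small). Writing $u_t = \rme^{t\cLxt^*} f - h_\tau \in \mathcal{H}\cap L^2_0(\mu)$, which solves $\partial_t u_t = \cLxt^* u_t$, I would differentiate the squared norm:
\[
\frac{d}{dt}\|u_t\|_{a(\xi)}^2 = 2\la u_t, \cLxt^* u_t\ra + 2 a(\xi) \la G u_t, G \cLxt^* u_t \ra,
\]
and seek a lower bound of the form $\|u_t\|_{a(\xi)}^2 \,\cdot\, 2\overline\lambda_\delta \mx$ for the negative of the right-hand side.

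The decomposition of $\cLxt^* = -\cLham + \xi\cLFD - \tau\cLpert + \frac{\tau\beta}{m} F\cdot p$ drives the estimates. The term $\la u_t, \cLxt^* u_t\ra$ equals $\la u_t, \xi\cLFD u_t\ra$ plus the perturbation contributions, and $\la u_t, \xi\cLFD u_t\ra = -\frac{\xi}{\beta}\|\nabla_p u_t\|^2_{L^2(\mu)}$, which only controls the momentum derivative — hence the need for the cross term. For $\la G u_t, G\cLxt^* u_t\ra$ I would compute the commutators $[G, \cLham]$, $[G,\cLFD]$, $[G,\cLpert]$ and $[G, F\cdot p]$. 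The key algebraic feature of the combined derivative $G = \nabla_q + \nabla_p$ is that $[G,\cLham]$ produces (up to the Hessian of $U$, which is bounded since $U$ is smooth on the compact torus) a term that, combined with the $\cLFD$ dissipation, yields coercivity in $G u_t$. One then gets schematically
\[
\la G u_t, G\cLxt^* u_t\ra \le -c_1 \xi \|\nabla_p G u_t\|^2 + c_2 \|G u_t\|^2 + (\text{cross and }\tau\text{-terms}),
\]
and the cross term $\la \nabla_q u_t, (\text{something})\ra$ arising from $[G,\cLham]$ is what must be dominated. Choosing $a(\xi)$ with the prescribed asymptotics ($a(\xi)\sim \overline a_0 \xi$ as $\xi\to 0$ and $a(\xi)\sim \overline a_\infty/\xi$ as $\xi\to\infty$) balances the $\xi\cLFD$ dissipation against the $O(1)$ destabilizing commutator contributions uniformly in $\xi$: the point is that $\xi a(\xi)$ and $a(\xi)/\xi$ should both stay bounded away from $0$ and $\infty$ in the respective limits, up to the overall $\mx$ scaling. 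After a Young-inequality juggle one arrives at $\frac{d}{dt}\|u_t\|_{a(\xi)}^2 \le -2\overline\lambda_0 \mx \|u_t\|_{a(\xi)}^2 + C|\tau|\,\|u_t\|_{a(\xi)}^2$, and since $|\tau|\le\delta\mx$ the perturbation is absorbed for $\delta$ small at the cost of $\overline\lambda_\delta = \overline\lambda_0 - O(\delta)$; Gr\"onwall then gives~\eqref{eq:cv_H1_like}, and a Poincar\'e-type inequality on $L^2_0(\mu)$ (available because the $q$-marginal is the uniform measure on the compact torus and the $p$-marginal is Gaussian, so $G$-coercivity plus the structure forces a spectral gap) ensures the rate is genuinely $\overline\lambda_\delta\mx$ rather than zero.

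The main obstacle I expect is twofold. First, getting the $\xi$-scaling exactly right in both limits simultaneously: one must verify that the destabilizing terms from $[G,\cLham]$ (of order $\|\Hess U\|_\infty$, hence $O(1)$ in $\xi$) are controlled by $\xi$-dissipation when $\xi$ is large and by the enhanced weight $a(\xi)\sim\overline a_0\xi$ when $\xi$ is small, with no gap in between — this requires a somewhat delicate choice of $a(\xi)$ and possibly splitting $(0,\infty)$ into $\xi\le 1$ and $\xi\ge 1$ regimes (consistent with the $\mx$ in the statement). Second, the $\tau$-perturbation terms $\tau\cLpert u$ and $\frac{\tau\beta}{m}F\cdot p\, u$ must be bounded in the $\|\cdot\|_{a(\xi)}$ norm by $C|\tau|\|u\|_{a(\xi)}^2$ with $C$ \emph{independent of $\xi$}; the factor-of-$p$ term is the dangerous one since $F\cdot p$ is unbounded, so one needs to use the Gaussian weight (moments of $\mu$ in $p$ are finite and $\xi$-independent) together with the already-established dissipation in $\nabla_p u$ to close the estimate — this is precisely where the condition $|\tau|\le\delta\mx$ rather than merely $|\tau|$ small enters, and tracking that the resulting constant multiplies $\mx$ and not something worse is the crux. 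The final claim $\overline\lambda_\delta = \overline\lambda_0 + O(\delta)$ is then immediate from the linear-in-$\delta$ nature of the absorbed perturbation.
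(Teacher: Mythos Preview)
Your plan is the same strategy as the paper's: differentiate $\|u_t\|_{a(\xi)}^2$, compute commutators of $G=\nabla_p+\nabla_q$ with each piece of $\cLxt^*$, pick $a(\xi)\sim\mx$ to balance, and close by Gronwall. Two points where the paper's execution sharpens your sketch are worth flagging.

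First, the perturbation is \emph{not} bounded as $C|\tau|\|u\|_{a(\xi)}^2$ with $C$ uniform in~$\xi$; neither $\|\nabla_p g\|$ nor $\|\nabla_q g\|$ is individually controlled by $\|g\|_{a(\xi)}$ (only $\|Gg\|$ is), so that route does not close. Instead the paper packages everything as a $2\times2$ matrix inequality in $X=(\|\nabla_p g\|,\|\nabla_q g\|)^T$: the equilibrium part yields $\la g,\cL_{\xi,0}^* g\raa\le -X^TS(\xi)X-\frac{\xi a(\xi)}{\beta}\|G\nabla_p g\|^2$, the perturbation gives $|\la g,\cLpert^* g\raa|\le X^TT(\xi,\eta)X+\frac{a(\xi)}{4\eta}\|G\nabla_p g\|^2$, and Poincar\'e gives $\|g\|_{a(\xi)}^2\le X^TP(\xi)X$. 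Choosing $\eta=\beta|\tau|/(4\xi)$ makes the second-order terms cancel exactly, and both the admissible range of $\tau$ and the rate~$\lambda$ drop out of the single condition $S(\xi)-|\tau|T(\xi,\eta)\ge\lambda P(\xi)$, analyzed as $\xi\to0$ and $\xi\to+\infty$.

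Second, your worry about the unbounded factor $F\cdot p$ in $\cLxt^*$ dissolves once you write $\cLpert^*=F\cdot\nabla_p^*$ with $\partial_{p_j}^*=-\partial_{p_j}+\beta p_j/m$ and use $[G_i,\partial_{p_j}^*]=\delta_{ij}\beta/m$: every term in the estimate then involves only $\|g\|$, $\|\nabla_p g\|$, $\|\nabla_q g\|$, and $\|G\nabla_p g\|$, with no bare multiplication by~$p$. (Minor slip: the $q$-marginal of~$\mu$ is $\nu\propto\rme^{-\beta U}$, not the uniform measure---but it still satisfies a Poincar\'e inequality on the torus, which is all you need.)
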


The proof of this result can be read in Section~\ref{sec:proof_H1_like}. A careful inspection of the final argument based on an asymptotic analysis of the key matrix inequality~\eqref{eq:condition_matrix_H1} would allow to give more precise expressions of $\overline{a}_0,\overline{a}_\infty$ and $\overline{\lambda}_\delta$ as a function of the parameters of the dynamics. Let us also emphasize that, in the degenerate norm we consider, there is no prefactor on the right-hand side of~\eqref{eq:cv_H1_like}, contrarily to convergence theorems stated in $H^1(\mu)$ for which the prefactor degenerates in the limits $\xi \to 0$ or $\xi \to +\infty$. Let us finally mention that the way we formulate our result, by considering $\tau \in [-\delta \mx, \delta \mx]$ (rather than $\tau \in [-\delta_* \mx, \delta_* \mx]$), allows to emphasize that the convergence rate has some uniformity with respect to~$\tau$.

\medskip

By hypoelliptic regularization, the convergence result of Theorem~\ref{thm:hypo_H1_like} can be transferred to $L^2(\mu)$ (see Section~\ref{sec:proof_hypo_L2_like} for the proof). 

\begin{theorem}
  \label{thm:hypo_L2_like}
  There exist $C,\delta_* > 0$ such that, for any $\delta \in [0,\delta^*]$, there is $\overline{\lambda}_\delta>0$ for which, for all $\xi \in (0,+\infty)$ and all $\tau \in [-\delta \mx, \delta \mx]$, 
  \begin{equation}
    \label{eq:cv_L2_like}
    \forall f \in L^2_1(\mu), \ \forall t \geq 0, 
    \qquad 
    \left\| \rme^{t \cLxt^*}f - h_\tau \right\|_\Ltwo \leq C \rme^{-\overline{\lambda}_\delta \mx t} \| f - h_\tau \|_\Ltwo.
  \end{equation}
\end{theorem}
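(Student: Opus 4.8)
The strategy is the classical hypoelliptic regularization argument of Villani, adapted to track the dependence on $\xi$. The idea is that $\mathrm{e}^{t\cLxt^*}$ instantaneously smooths an arbitrary $L^2(\mu)$ datum into $\mathcal H$, after which Theorem~\ref{thm:hypo_H1_like} takes over. First, I would write $g_t = \mathrm{e}^{t\cLxt^*}(f - h_\tau) \in L^2_0(\mu)$ (using that $h_\tau$ is stationary for $\cLxt^*$ and that mass is preserved), so the claim reduces to $\|g_t\|_\Ltwo \le C\,\mathrm{e}^{-\overline\lambda_\delta \mx\, t}\|g_0\|_\Ltwo$. The plan is to establish a short-time regularization estimate of the form
\[
\|g_t\|_{a(\xi)} \le \frac{C_1}{\phi(\xi,t)}\,\|g_0\|_\Ltwo, \qquad 0 < t \le t_0(\xi),
\]
for a suitable power $\phi$ of $t$ (with $\xi$-dependent constants made explicit), then combine it with the exponential decay in $\|\cdot\|_{a(\xi)}$ from Theorem~\ref{thm:hypo_H1_like} for $t \ge t_0(\xi)$, and finally convert back to $\|\cdot\|_\Ltwo$ using $\|\cdot\|_\Ltwo \le \|\cdot\|_{a(\xi)}$ and $\|\cdot\|_{a(\xi)}^2 \le \|\cdot\|_\Ltwo^2 + a(\xi)\,\|(\nabla_p+\nabla_q)\cdot\|_\Ltwo^2$.

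The regularization estimate itself I would obtain by the standard route: introduce a time-weighted functional
\[
\mathcal N(t) = \|g_t\|_\Ltwo^2 + \alpha\, t\, \|\nabla_p g_t\|_\Ltwo^2 + 2\gamma\, t^2 \la \nabla_p g_t, \nabla_q g_t\ra + \zeta\, t^3 \|\nabla_q g_t\|_\Ltwo^2,
\]
with constants $\alpha,\gamma,\zeta>0$ to be chosen (and tuned in their $\xi$-scaling), differentiate along the flow, and use the commutator identities $[\nabla_p,\cLxt^*]$, $[\nabla_q,\cLxt^*]$ — the key point being that $[\nabla_q,\cLham]$ regenerates a $\nabla_p$ direction, which is the mechanism transferring dissipation from momenta to positions. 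The various cross terms and the terms coming from the $\tau$-perturbation $-\tau F\cdot\nabla_p + \frac{\tau\beta}{m}F\cdot p$ are controlled by Young's inequality, using $|\tau| \le \delta\mx$ to absorb them into the $\xi$-good terms exactly as in the proof of Theorem~\ref{thm:hypo_H1_like}; one then derives $\mathcal N'(t) \le 0$ (or $\le$ lower-order), hence $\mathcal N(t) \lesssim \|g_0\|_\Ltwo^2$, which after expressing $\|(\nabla_p+\nabla_q)g_t\|_\Ltwo^2 \lesssim \|\nabla_p g_t\|_\Ltwo^2 + \|\nabla_q g_t\|_\Ltwo^2$ yields the desired bound with an explicit negative power of $t$ and an explicit $\xi$-dependent prefactor.

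The main obstacle is bookkeeping the $\xi$-dependence so that the final constant $C$ is \emph{uniform} in $\xi\in(0,+\infty)$, as the statement requires — in particular surviving both limits $\xi\to0$ and $\xi\to+\infty$. This forces the weights $\alpha,\gamma,\zeta$ and the crossover time $t_0(\xi)$ to be chosen as the right powers of $\xi$ (and of $\mx$): roughly, the regularization "costs" a factor blowing up like a power of $\min(\xi,\xi^{-1})^{-1}$, while the exponential factor $\mathrm{e}^{-\overline\lambda_\delta\mx\, t_0(\xi)}$ must exactly compensate it, so $t_0(\xi)$ should scale like $\mx^{-1}$ up to logarithmic corrections — and one checks the product stays bounded. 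A secondary technical point is justifying all the integrations by parts and the a priori regularity/decay of $g_t$ needed to run the energy estimates rigorously; this is handled by working first with smooth compactly supported data (dense in $L^2(\mu)$) and the hypoellipticity of $\cLxt$, then passing to the limit.
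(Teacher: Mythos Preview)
Your approach coincides with the paper's: hypoelliptic regularization via exactly the time-weighted functional $\mathcal N(t)$ you wrote (packaged there as Proposition~\ref{prop:hypoelliptic_reg}, with the weights --- called $A(\xi),B(\xi),C(\xi)$ --- all taken of order $\min(\xi,\xi^{-1})$), followed by the exponential decay of Theorem~\ref{thm:hypo_H1_like} in the $\|\cdot\|_{a(\xi)}$ norm and the trivial bound $\|\cdot\|_\Ltwo\leq\|\cdot\|_{a(\xi)}$.

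The one place you go astray is the assembly. You propose a $\xi$-dependent crossover time $t_0(\xi)\sim\mx^{-1}$, hoping to compensate the regularization blow-up against the exponential factor $\rme^{-\overline\lambda_\delta\mx\,t_0(\xi)}$. This is both unnecessary and awkward: the short-time estimate $\mathcal N'(t)\leq R|\tau|\,\mathcal N(t)$ is derived on a \emph{fixed} interval $[0,1]$ (the absorption of cross-terms uses $t\leq 1$ repeatedly), and pushing it to an interval of length $\max(\xi,\xi^{-1})\to\infty$ would require the very long-time control you are trying to establish. The paper simply takes $t_0=1$. The $\xi$-uniformity of the final constant $C$ comes not from the time variable but from the norm itself: since the weights in $\mathcal N$ are of order $\min(\xi,\xi^{-1})$, the bound $\mathcal N(1)\lesssim\|g_0\|_\Ltwo^2$ gives $\|(\nabla_p+\nabla_q)g_1\|_\Ltwo^2\lesssim\max(\xi,\xi^{-1})\,\|g_0\|_\Ltwo^2$, while the gradient in $\|\cdot\|_{a(\xi)}^2$ carries the coefficient $a(\xi)\sim\min(\xi,\xi^{-1})$ from Theorem~\ref{thm:hypo_H1_like}. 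Hence
\[
\|g_1\|_{a(\xi)}^2 \leq \Big(1 + C\,a(\xi)\max(\xi,\xi^{-1})\Big)\|g_0\|_\Ltwo^2,
\]
and the product $a(\xi)\max(\xi,\xi^{-1})$ is bounded uniformly in $\xi$. So the $\xi$-scaling already built into the degenerate norm absorbs the regularization cost, and no tuning of $t_0$ is needed. With this observation the proof collapses to two lines, exactly as the paper presents it.
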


The convergence rate $\overline{\lambda}_\delta$ is the same as in Theorem~\ref{thm:hypo_H1_like}. Note also that the prefactor $C$ is independent of~$\delta$ since the regularization properties of the dynamics can be shown to be uniform with respect to~$\tau$ (see Proposition~\ref{prop:hypoelliptic_reg} in Section~\ref{sec:proof_hypo_L2_like}).

We conclude by emphasizing an important restriction of the results in this section: in the large friction limit, only very small forcings~$\tau$, of order~$1/\xi$ are allowed, whereas, in view of~\eqref{eq:scaling_r_xi}, the density $h_\tau$ is well defined for values of~$\tau$ of order~1. This restriction can however be bypassed by the more direct approach from~\cite{DMS09,DMS15}, as made precise in Section~\ref{sec:direct_L2}.

\subsection{Direct $L^2$ estimates}
\label{sec:direct_L2}

We state a convergence result similar to Theorem~\ref{thm:hypo_L2_like}, with the important difference that the too stringent restriction on the upper bound of $\tau$ is removed; see Section~\ref{sec:proof_L2_DMS} for the proof.

\begin{theorem}
  \label{thm:L2_DMS}
  There exist $C,\delta_* > 0$ such that, for any $\delta \in [0,\delta^*]$, there is $\overline{\lambda}_\delta>0$ for which, for all $\xi \in (0,+\infty)$ and all $\tau \in [-\delta \min(\xi,1), \delta \min(\xi,1)]$,
  \begin{equation}
    \label{eq:cv_L2_like_DMS}
    \forall f \in L_1^2(\mu), \ \forall t \geq 0, 
    \qquad 
    \left\| \rme^{t \cLxt^*}f - h_\tau \right\|_\Ltwo \leq C \rme^{-\overline{\lambda}_\delta \mx t} \| f - h_\tau \|_\Ltwo.
  \end{equation}
  Moreover, $\overline{\lambda}_\delta = \overline{\lambda}_0 + \mathrm{O}(\delta)$.
\end{theorem}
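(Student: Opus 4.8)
The plan is to follow the direct $L^2$ hypocoercive strategy of \cite{DMS09,DMS15}, adapted to track carefully the dependence on $\xi$ and $\tau$. Write $h = e^{t\cLxt^*}f - h_\tau$, which satisfies $\partial_t h = \cLxt^* h$ with $\int_\cE h\,d\mu = 0$ and lies in $L^2_0(\mu)$ for all times. Decompose $\cLxt^* = \cLovd - \cLham - \tau F\cdot\nabla_p + \frac{\tau\beta}{m}F\cdot p$, where $\cLovd = \xi\cLFD$ is symmetric negative on $L^2(\mu)$ with kernel the functions of $q$ alone (call this space the momentum-constant functions), and $\cLham$ is antisymmetric. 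Introduce the orthogonal projector $\Pi$ onto the kernel of $\cLovd$, i.e. $\Pi g = \int_{\R^d} g\,\kappa(dp)$ with $\kappa$ the Gaussian marginal in $p$. Following DMS, consider the modified entropy
\begin{equation}
\label{eq:DMS_entropy}
\mathcal{F}_\eps(h) = \frac12\|h\|_\Ltwo^2 + \eps\, \la A h, h\ra,
\qquad
A = \bigl(1 + (\cLham\Pi)^*(\cLham\Pi)\bigr)^{-1}(\cLham\Pi)^*,
\end{equation}
where the choice of $A$ is exactly the one making $A\cLham\Pi$ coercive on $\mathrm{Ran}\,\Pi$ up to the spectral gap of the overdamped operator \eqref{eq:overdamped} on $L^2_0$ of the position marginal. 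The scalings to keep in mind are $\|A\|_{\sB} \lesssim 1$, $\|A\cLovd\|_\sB \lesssim \xi$, and the spectral gap of $\cLovd$ restricted to $(\mathrm{Ran}\,\Pi)^\perp$ is of order $\xi$, while that of $A\cLham\Pi$ is of order $1$; together these give the characteristic $\mx$ rate.

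The key step is to compute $\frac{d}{dt}\mathcal{F}_\eps(h)$ along the flow and bound it by $-c\,\mx\,\mathcal{F}_\eps(h)$ for a suitable $\eps = \eps(\xi)$. The computation splits into the equilibrium part (the terms with $\cLovd$ and $\cLham$, handled exactly as in the $\tau=0$ analysis of \cite{LMS15,LS16}, giving a contribution $\le -c_0\,\mx\,\mathcal{F}_\eps(h)$) plus the perturbative terms coming from $-\tau F\cdot\nabla_p + \frac{\tau\beta}{m}F\cdot p$. For the perturbative terms I would estimate the quadratic form term by term: the contribution to $\frac12\frac{d}{dt}\|h\|^2$ is $\tau\la (-F\cdot\nabla_p + \frac\beta m F\cdot p)h, h\ra$, which after integration by parts reduces to a multiple of $\tau\la (F\cdot p) h, h\ra$ and is controlled by $|\tau|\,\|h\|\,\|\nabla_p h\|$ (using that $F\cdot p$ is bounded in the appropriate weighted sense and Cauchy–Schwarz), hence absorbed into the dissipation $-\frac{c\xi}{\beta}\|\nabla_p h\|^2$ provided $|\tau| \lesssim \xi$. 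The contribution to $\eps\frac{d}{dt}\la Ah,h\ra$ involves $\eps\tau$ times terms of the form $\la A(F\cdot\nabla_p)h,h\ra$, $\la (F\cdot\nabla_p)^* A^* h, h\ra$ and analogous ones with $\frac\beta m F\cdot p$; using $\|A\|_\sB\lesssim 1$, $\|A\cLham\|_\sB \lesssim 1$ and the boundedness of the auxiliary operators built from $\Pi$, each is $\le C\eps|\tau|\,\mathcal{F}_\eps(h)$, which is lower-order and absorbed when $\eps$ is small. Choosing $\eps \sim \min(1,\xi^{-1})$ (so that $\eps\,\mx \sim \min(1,\xi^{-1}, \xi, 1)$ balances the mixed terms), one collects a differential inequality $\frac{d}{dt}\mathcal{F}_\eps(h) \le -2\overline\lambda_\delta\,\mx\,\mathcal{F}_\eps(h)$ with $\overline\lambda_\delta = \overline\lambda_0 - C\delta$; Grönwall then gives exponential decay of $\mathcal{F}_\eps(h)$, and the equivalence $\frac14\|h\|_\Ltwo^2 \le \mathcal{F}_\eps(h) \le \|h\|_\Ltwo^2$ (valid for $\eps$ small, uniformly in $\xi$ by the bound $\|A\|_\sB\le C$) converts this into \eqref{eq:cv_L2_like_DMS} with a prefactor $C=2$.

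The point where care is genuinely needed — and where this differs from the $H^1(\mu)$-based Theorem~\ref{thm:hypo_L2_like} — is that the perturbative terms must be dominated using \emph{only} the overdamped dissipation $\xi\|\nabla_p h\|^2$ and the coercivity $\mx\|h\|^2$ coming from $\mathcal{F}_\eps$, not a full $H^1(\mu)$ coercivity. This is exactly why the admissible forcing improves from $\mathrm{O}(\xi^{-1})$ to $\mathrm{O}(\min(1,\xi))$: the $\tau$-term $\frac{\tau\beta}{m}F\cdot p$ multiplies $h$ (not $\nabla h$), so it costs $|\tau|\,\|h\|^2$ which is paid by $\mx\,\|h\|^2$ requiring $|\tau| \lesssim \mx$; and the $\tau$-term $F\cdot\nabla_p$ costs $|\tau|\,\|h\|\,\|\nabla_p h\|$, paid by interpolation against $\xi\|\nabla_p h\|^2 + \mx\|h\|^2$, again requiring $|\tau|\lesssim \xi$ at small $\xi$ and $|\tau|\lesssim 1$ at large $\xi$, i.e. $|\tau|\lesssim\min(1,\xi)$. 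The main obstacle is verifying, with explicit $\xi$-uniform constants, the boundedness of the family of auxiliary operators $A$, $\cLham\Pi$, $A\cLham\Pi$, $(\cLham\Pi)^*(\cLham\Pi)$ and their compositions with $\nabla_p$, and tracking how the small parameter $\eps$ must scale with $\xi$ so that all cross terms close simultaneously in both asymptotic regimes — this is the same technical core as in the $\tau=0$ analysis of \cite{LS16}, here re-run keeping the $\tau$-dependent remainders.
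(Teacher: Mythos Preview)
Your approach is essentially the paper's: the DMS modified entropy with the operator $A$, tracking of $\xi$-dependence via a parameter $a(\xi)$ (your $\eps$), and absorption of the $\tau$-terms by Young's inequality against the fluctuation--dissipation part. Two points, however, are not right as written and would cause the argument to fail if carried out literally.

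First, the scaling of $\eps$. You propose $\eps\sim\min(1,\xi^{-1})$, but for small $\xi$ this does not close: the dissipation on $(1-\Pi)h$ is only of order $\xi$, while the cross terms $\eps\,\langle A\cLham(1-\Pi)h,\Pi h\rangle$ and $\eps\,\xi\,\langle A\cLFD h,h\rangle$ are of order $\eps$. Positivity of the resulting $2\times2$ matrix forces $\eps\lesssim\xi$ as $\xi\to0$. The correct choice (and the one the paper makes) is $a(\xi)\sim\min(\xi,\xi^{-1})$; with your $\eps\sim 1$ at small $\xi$ the matrix inequality fails.

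Second, your final heuristic paragraph misidentifies the mechanism that allows $|\tau|\lesssim\min(1,\xi)$ rather than $|\tau|\lesssim\min(\xi,\xi^{-1})$. You write that the multiplicative term $\frac{\tau\beta}{m}F\cdot p$ ``costs $|\tau|\,\|h\|^2$'', but $F\cdot p$ is unbounded and this bound is false; moreover your own earlier (correct) estimate reduced the full perturbation contribution to $|\tau|\,\|h\|\,\|\nabla_p h\|$. The conclusion ``$|\tau|\lesssim\mx$'' you draw from the wrong bound would, for large $\xi$, reproduce exactly the restrictive $\mathrm{O}(\xi^{-1})$ limitation you are trying to avoid. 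The actual reason one can take $|\tau|=\mathrm{O}(1)$ for large $\xi$ is the freedom in the Young parameter: writing $|\tau|\,\|h\|\,\|\nabla_p h\|\le \frac{|\tau|}{2\eta}\|\nabla_p h\|^2+\frac{\eta|\tau|}{2}\|h\|^2$ and choosing $\eta\sim\xi^{-1}$ at large $\xi$, the first term is absorbed by $\xi\|\nabla_p h\|^2$ provided $|\tau|\lesssim 1$, and the second becomes $\mathrm{O}(|\tau|/\xi)\|h\|^2$, absorbed by the $a(\xi)\sim\xi^{-1}$ coercivity on $\Pi h$. This $\eta$-dependent splitting is the crux of the improvement over Theorem~\ref{thm:hypo_L2_like} and is missing from your account.
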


As in Section~\ref{sec:std_hypoc}, the convergence rate~$\overline{\lambda}_\delta$ can be quantified in terms of the various parameters of the dynamics by optimizing the smallest eigenvalue of a matrix (see~\eqref{eq:key_estimate_DMS} in the proof). It would also be possible to obtain a contraction result on $\| \rme^{t \cLxt^*}f - \rme^{t \cLxt^*}g\|_\Ltwo$ for two elements $f,g \in L^2_1(\mu)$, from which the existence and uniqueness of an invariant measure characterized by $h_\tau$ can be deduced as in~\cite{BHM16}. 

\section{Numerical estimation of the spectral gap}
\label{sec:num}

Let us first relate the exponential decay of the semigroup $\rme^{t \cLxt^*}$ with the spectral gap of $\cL^*_{\xi,\tau}$ in $L^2(\mu)$; and in fact the spectral gap of $\cL_{\xi,\tau}$. We fix $\xi \in (0,+\infty)$ and $\tau \in [-\delta \min(\xi,1), \delta_* \min(\xi,1) ]$, where $0 \leq \delta \leq \delta_*$ is defined in Theorem~\ref{thm:L2_DMS}. We first note that the decay estimate~\eqref{eq:cv_L2_like_DMS} implies that for all $f,g \in L^2(\mu)$,
\[
\begin{aligned}
  \left| \la \rme^{t \cLxt}g - \la g, h_\tau \ra, f \ra \right| & = \left| \la \rme^{t \cLxt}g, f \ra - \la g, h_\tau\ra  \int_\cE f \, d\mu \right| = \left| \la g, \rme^{t \cLxt^*}f - h_\tau  \int_\cE f \, d\mu\ra \right| \\
  & \leq  \| g \|_{L^2(\mu)} \left\|  \rme^{t \cLxt^*}f - h_\tau \int_\cE f \, d\mu\right\|_{L^2(\mu)} \\
  & \leq C \rme^{-\overline{\lambda}_\delta \min(\xi,\xi^{-1})t} \| g \|_{L^2(\mu)} \left\| f -  h_\tau\int_\cE f \, d\mu \right\|_{L^2(\mu)}\\
  & \leq  C \rme^{-\overline{\lambda}_\delta \min(\xi,\xi^{-1})t} \| g \|_{L^2(\mu)}\left( 1 + \| h_\tau\|_{L^2(\mu)}\right) \| f \|_{L^2(\mu)},
\end{aligned}
\]
using a Cauchy--Schwarz inequality for the bound $\|f\|_{L^1(\mu)} \leq \| f \|_{L^2(\mu)}$. By taking the supremum over~$f \in L^2(\mu)$ with $\|f\|_\Ltwo = 1$, it follows that
\begin{equation}
  \label{eq:decay_semigroup_observables}
  \left\| \rme^{t \cLxt} g - \la g, h_\tau \ra \right\|_\Ltwo \leq C \rme^{-\overline{\lambda}_\delta \min(\xi,\xi^{-1})t} \| g \|_{L^2(\mu)}\left( 1 + \| h_\tau\|_{L^2(\mu)}\right).
\end{equation}

Let us introduce the subspace of observables with average~0 with respect to the invariant measure of the nonequilibrium system:
\[
L_{0,\tau}^2(\mu) = \left\{ g \in L^2(\mu) \,\left| \, \la g, h_\tau \ra = 0 \right.\right\}.
\]
A simple computation shows that this space is stable by $\rme^{t \cLxt}$. The decay estimate~\eqref{eq:decay_semigroup_observables} then shows that the following equality holds in $L^2_{0,\tau}(\mu)$: for all $z \in \mathbb{C}$ such that $\mathrm{Re}(z) < \overline{\lambda}_\delta \min(\xi,\xi^{-1})$, 
\[
\left(z+\cLxt\right)^{-1} = -\int_0^{+\infty} \rme^{t \cLxt} \, \rme^{z t} \, dt,
\]
with
\[
\left\|\left(z+\cLxt\right)^{-1} \right\|_{\sB(L^2_{0,\tau}(\mu))} \leq  \frac{C\left( 1 + \| h_\tau\|_{L^2(\mu)}\right)}{\overline{\lambda_\delta} \min(\xi,\xi^{-1})-\mathrm{Re}(z)}.
\]
This means that the spectral gap $\gamma(\xi,\tau)$ of the generator $\cLxt$ on $L^2_{0,\tau}(\mu)$, defined as
\[
\gamma(\xi,\tau) = \min \left\{ \mathrm{Re}(z), \, z \in \sigma(-\cLxt) \backslash \{0\} \right\},
\]
is bounded from below as 
\begin{equation}
  \label{eq:lower_bound_sg}
  \gamma(\xi,\tau) \geq \overline{\lambda}_\delta \min(\xi,\xi^{-1}).
\end{equation}
We show in Section~\ref{sec:Galerkin} how to approximate the spectral gap using a Galerkin discretization of the generator $\cLxt$. We then study in Section~\ref{sec:num_res} the relevance of the lower bound~\eqref{eq:lower_bound_sg} from a numerical viewpoint, and check that it is in fact sharp.

\subsection{Discretization by a Galerkin procedure}
\label{sec:Galerkin}

We consider $d=1$ for simplicity, although the discretization procedure described below can be extended to any arbitrary dimension by a tensorization argument, with of course the caveat that the computational cost of the method explodes. The Galerkin discretization we use relies on a tensor product of Fourier modes in space and Hermite functions for the momenta, see for instance~\cite{LPK13,RST16,RS17} for previous similar discretizations as well as~\cite{Risken} for a seminal presentation of such approaches. More precisely, we introduce the following discretization basis for $n \in \{0,\dots,N\}$ and $k\in \{-K,\dots,K\}$:
\[
\psi_{nk}(q,p)= G_k(q) H_n(p), \qquad G_k(q) = \sqrt{\frac{Z_\nu}{2\pi}} \, \rme^{\ri kq+ \beta U(q)/2},
\]
where $H_n$ is the Hermite polynomial of degree $n$:
\[
H_n(p)\,=\,\frac{(-1)^n}{n!}\,\left(\sqrt{\frac m\beta}\right)^n \rme^{\beta p^2/(2m)} \frac{\dd^n}{\dd p^n}\left( \rme^{-\beta p^2/(2m)} \right),
\]
and 
\[
Z_\nu = \int_0^{2\pi} \rme^{-\beta U}
\]
is the normalization constant of the marginal of~$\mu$ in the position variable. Recall that the Hermite polynomials are orthonormal on $L^2(\kappa)$, where $\kappa(dp)$ is the marginal of~$\mu$ in the momentum variable:
\begin{equation}
  \label{eq:kappa}
  \kappa(dp) = \int_\cD \mu(dq \,dp) = \sqrt{\frac{\beta}{2\pi m}} \rme^{-\beta p^2/(2m)} \, dp.
\end{equation}
Note that the family $\{\psi_{nk}(q,p)\}_{0 \leq n \leq N, -K \leq k\leq K}$ is orthonormal in $L^2(\mu)$, and spans $L^2(\mu)$ in the limit $N \to +\infty$ and $K \to +\infty$. 

The generator $\cLxt$ is represented in this basis by a matrix with elements
\begin{equation}
  \label{eq:matrix_approx_cLxt}
  \left[L_{\xi,\tau}\right]_{n'k',nk} = \la \psi_{n'k'},\cLxt \psi_{nk} \ra.
\end{equation}
These matrix elements are easily computed using the following properties of Hermite functions:
\[
\begin{aligned}
\partial_p H_n(p) & = \sqrt{\frac{\beta n}{m}} H_{n-1}(p)\,,\\
\partial_p^* H_n(p) & = \frac{\beta p}{m} H_{n}(p)\,-\partial_p H_{n-1}(p) = \sqrt{\frac{\beta(n+1)}{m}}\,H_{n+1}(p).
\end{aligned}
\]
In particular, Hermite polynomials are a complete set of eigenfunctions for $\cLFD$ on $L^2(\kappa)$, and $\cLFD H_n = -n H_n/m$. In addition, the action of derivatives on the Fourier modes can be evaluated as
\[
\la G_{k'}, \partial_q G_k\right\rangle_{L^2(\nu)} = \ri k \delta_{k,k'} + \frac{\beta}{2} u_{k-k'}, 
\qquad 
u_K = \frac{1}{2\pi} \int_0^{2\pi} U'(q) \, \rme^{\ri Kq} \, dq.
\]
Finally, note also that the Hamiltonian part of the generator can be rewritten as
\[
\cLham = \frac1\beta \left(\partial_q \partial_p^* - \partial_p \partial_q^* \right).
\]
Therefore, 
\[
\begin{aligned}
\la \psi_{n'k'}, \cLham \psi_{nk} \ra & = \frac1\beta\left( \la G_{k'}, \partial_q G_{k}\right\rangle_{L^2(\nu)} \la H_{n'}, \partial_p^* H_n\right\rangle_{L^2(\kappa)} - \overline{\la G_{k}, \partial_q G_{k'}\right\rangle_{L^2(\nu)}} \la H_{n'}, \partial_p H_n\right\rangle_{L^2(\kappa)} \right) \\
& = \sqrt{\frac{\beta(n+1)}{m}} \delta_{n',n+1}\left[\frac{\ri k}{\beta} \delta_{k,k'} + \frac{1}{2} u_{k-k'}\right] + \sqrt{\frac{\beta n}{m}} \delta_{n',n-1} \left[ \frac{\ri k}{\beta} \delta_{k,k'} - \frac{1}{2} u_{k-k'}\right].
\end{aligned}
\]
Moreover, with $F=1$, 
\[
\la \psi_{n'k'},\cLpert \psi_{nk} \ra = \sqrt{\frac{\beta n}{m}} \delta_{n',n-1} \delta_{k,k'}.
\]
In conclusion,
\[
\begin{aligned}
\la \psi_{n'k'}, \cLxt \psi_{nk} \ra & = \sqrt{\frac{\beta(n+1)}{m}} \delta_{n',n+1}\left[\frac{\ri k}{\beta} \delta_{k,k'} + \frac{1}{2} u_{k-k'}\right] + \sqrt{\frac{\beta n}{m}} \delta_{n',n-1} \left[ \left(\frac{\ri k}{\beta} + \tau\right)\delta_{k,k'} - \frac{1}{2} u_{k-k'}\right] \\
& - \frac{\xi n}{m} \delta_{n,n'} \delta_{k,k'}.
\end{aligned}
\]
The spectral gap of $\cLxt$ is approximated by the spectral gap of the matrix with elements~\eqref{eq:matrix_approx_cLxt}. There are currently no error estimates on the spectral approximation provided by this procedure, while in contrast there is a large body of literature on the Galerkin approximation of operators defined by quadratic forms (see for instance~\cite{Chatelin} and references therein). Some preliminary steps in this direction are however provided in~\cite{RS17} in the form of error estimates on the discretization of solutions of Poisson equations $-\cLxt \Phi = f$.


Although the extension of the Galerkin procedure poses no difficulties from a conceptual viewpoint, it is strongly limited by the size of the matrices to be considered, which, for a standard tensorized basis, are of dimension $(N+1)^d (2K+1)^d$. The two-dimensional case is therefore already challenging. In order to make the method more efficient, it would be necessary to use the sparsity of the matrices under consideration, devise better bases than tensorized ones by using tensor formats~\cite{Hackbusch}, relying on preconditionning strategies, etc. This work is in progress.

\subsection{Numerical results}
\label{sec:num_res}

We choose $U(q) = U_0(1 - \cos{q})$, which corresponds to the so-called rotor model. The partition function $Z_\nu$ can be explicitly computed as
\[
Z_\nu = 2\pi \rme^{-\beta U_0} I_0(\beta U_0),
\]
where $I_0(a) = (2\pi)^{-1}\int_0^{2\pi} \rme^{a \cos(q)} \, \dd q$ is the modified Bessel function of the first kind. In addition,
\[
u_{k-k'} = -\frac{\ri U_0}{2} \left(\delta_{k',k+1} - \delta_{k',k-1} \right).
\]
We first show in Section~\ref{sec:cv_num} that the spectral gap can be estimated with the procedure described in Section~\ref{sec:Galerkin}, by comparing the computed spectral gap to known analytical values in the case when $U_0 = 0$ (see~\cite{Kozlov89}) and studying the convergence of the spectral gap in the limit $N,K \to +\infty$. We next present in Section~\ref{sec:sg} plots of the spectral gap a function of the friction, for various values of the external forcing~$\tau$ and magnitude~$U_0$ of the potential energy.

\subsubsection{Checking the convergence}
\label{sec:cv_num}

We first study the convergence of the spectral gap with respect to the basis sizes $N,K$. We choose $N = 2K$ in all our simulations, as well as $m=\beta=1$. As can be seen from the numerical results presented in Figure~\ref{fig:cv} and as confirmed by extensive simulations not reported here, the convergence is faster for larger values of the friction~$\xi$. In the results reported below, we consider the spectral gap to be converged when the relative variation between the current estimation of the spectral gap for a given basis size $K=N/2$ and the average of the last three ones corresponding to values of $K-1,K-2,K-3$ is lower than $10^{-3}$. Let us also emphasize that, when the values of $\tau$ are too large in the case of small frictions, we do not observe convergence for reasonable values of $K$ (about~20). In fact, for too large forcings, we observe that the spectral gap is negative for the smallest values of~$K$, but then becomes positive again for sufficiently large~$K$; although this stabilization to a positive value may occur for prohibitely large values of $K$ for small~$\xi$. 

\begin{figure}
\begin{center}
\includegraphics[width=0.48\textwidth]{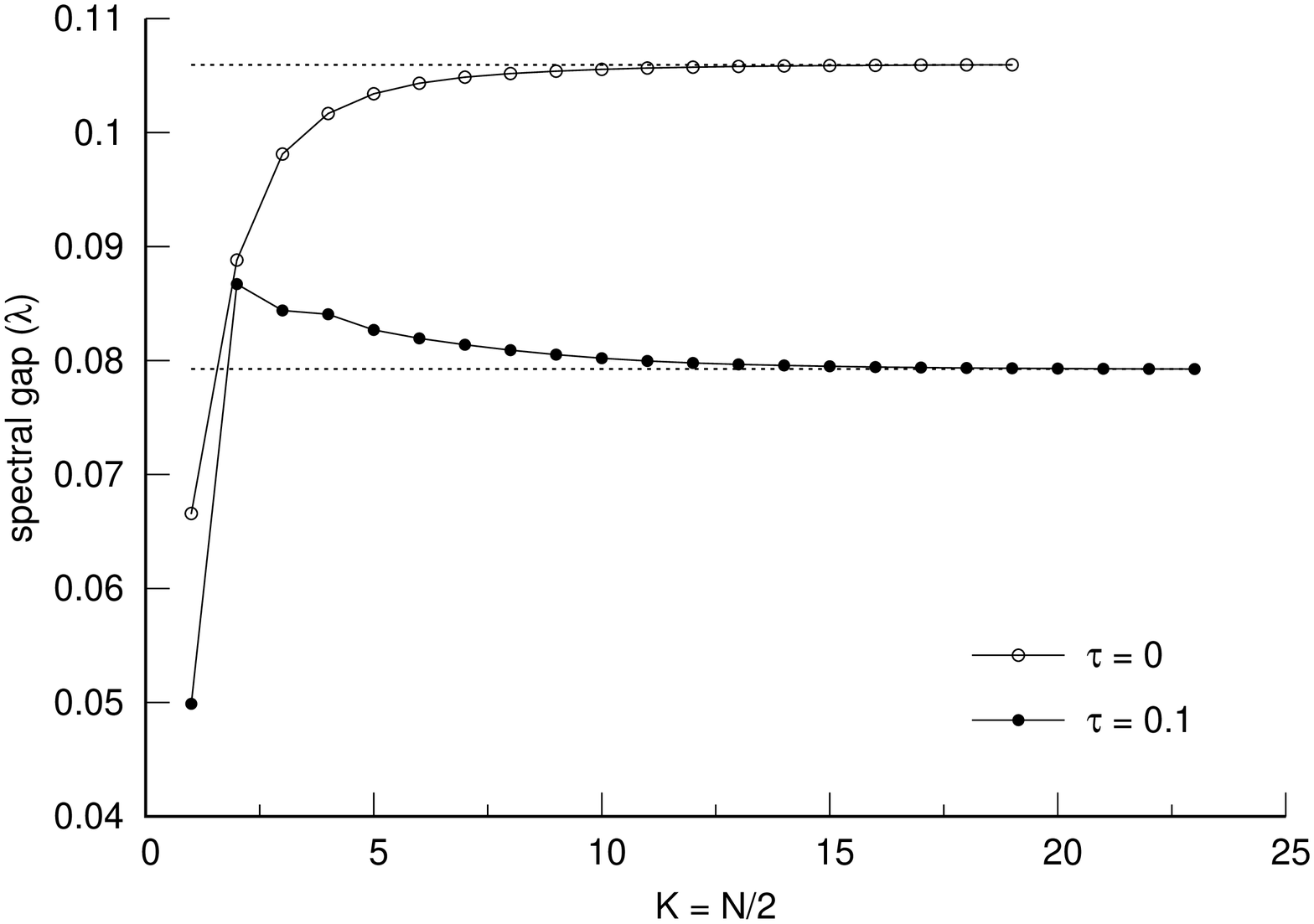}
\includegraphics[width=0.48\textwidth]{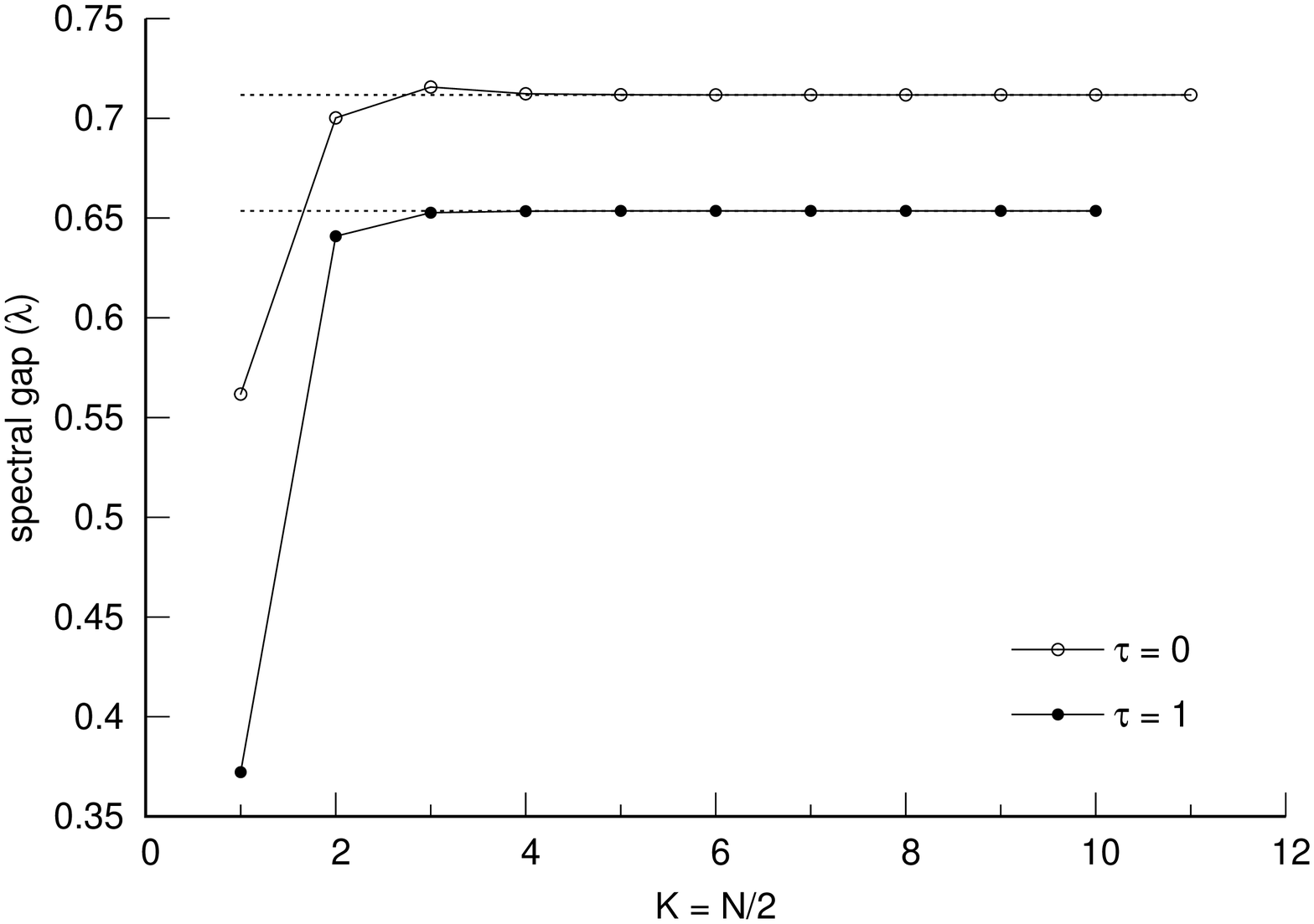}
\end{center}
\caption{\label{fig:cv} Convergence of the estimated spectral gap as a function of the basis sizes $K = N/2$, for $U_0 = 1$. Left: $\xi = 0.1$. Right: $\xi = 1$.}
\end{figure}

As a consistency check, we next verify that the spectral gap predicted by the Galerkin method agrees with the one which can be analytically computed in the case when $U_0=0$. More precisely, the results of~\cite{Kozlov89} show that the eigenvalues are
\[
\lambda_{n,k} = - \frac{n \xi}{m} - \frac{k^2}{\beta \xi},
\]
so that the spectral gap is
\begin{equation}
\label{eq:sg_Kozlov}
\gamma(\xi,0) = \min\left(\frac{\xi}{m},\frac{1}{\beta \xi}\right).
\end{equation}
The crossover from one eigenvalue branch to the other occurs at $\xi = \sqrt{m/\beta}$. Figure~\ref{fig:Kozlov} presents the eigenvalues which are numerically computed. They are in perfect agreement with the theoretical prediction~\eqref{eq:sg_Kozlov}.

\begin{figure}
\begin{center}
\includegraphics[width=0.7\textwidth]{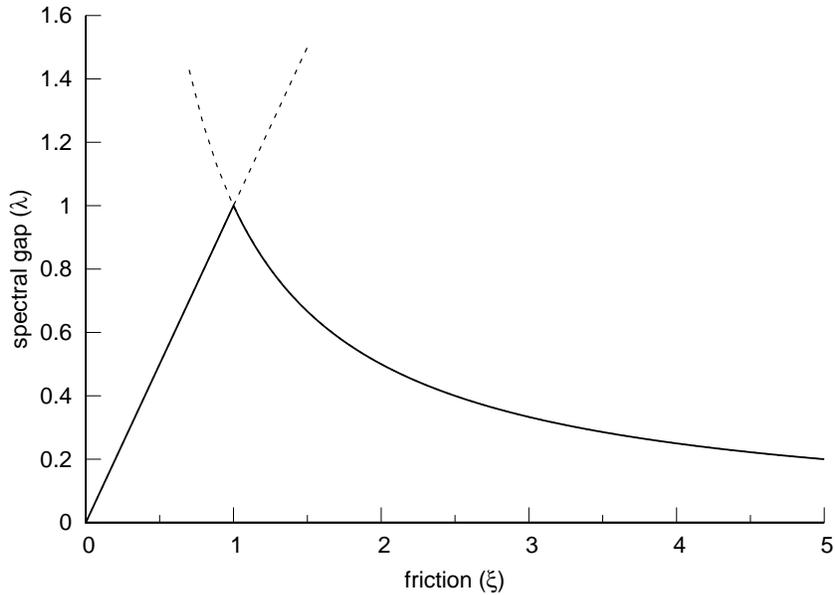}
\end{center}
\caption{\label{fig:Kozlov} Predicted spectral gap as a function of the friction~$\xi$ when $U_0=0$, $\beta=1$ and $m=1$ (solid line), and theoretical prediction~\eqref{eq:sg_Kozlov} (dashed lines; one of the lines represents $\xi$ the other one~$1/\xi$).}
\end{figure}

\subsubsection{Spectral gap as a function of the friction}
\label{sec:sg}

We report in Figures~\ref{fig:sg_tau_U1} and~\ref{fig:sg_tau_U1_zoom} the spectral gap as a function of the friction~$\xi$ for various values of the forcing~$\tau$ and the choice $U_0 = 1$ (with $m=1$ and $\beta=1$). The first point to mention is that the introduction of a potential smoothes out the sharp transition observed at $\xi = \sqrt{m/\beta} = 1$ when $U_0 = 0$ and $\tau = 0$ (recall Figure~\ref{fig:Kozlov}). As $\tau$ is increased, we were able to compute the spectral gap only for frictions above a certain treshold, roughly $\xi \geq |\tau|$. The spectral gap is increased with respect to the case~$\tau=0$ in a certain range of values of $\xi$ (roughly, $1 \leq \xi \leq 4$), and decreased for other ones (in particular $\xi \leq 1$). Forcings however have to be sufficiently strong in order for the perturbation to have some non-negligible impact (when $\tau = 0.1$, the perturbation on the spectral gap is visible only around $\xi = 2$). Lastly, we observe that the spectral gap decreases as $\Gamma_\tau/\xi$ for large~$\xi$ (for some prefactor $\Gamma_\tau > 0$) whatever the value of $\tau$ we considered, with $\Gamma_\tau$ decreasing as $\tau$ increases.

\begin{figure}
\begin{center}
\includegraphics[width=0.6\textwidth]{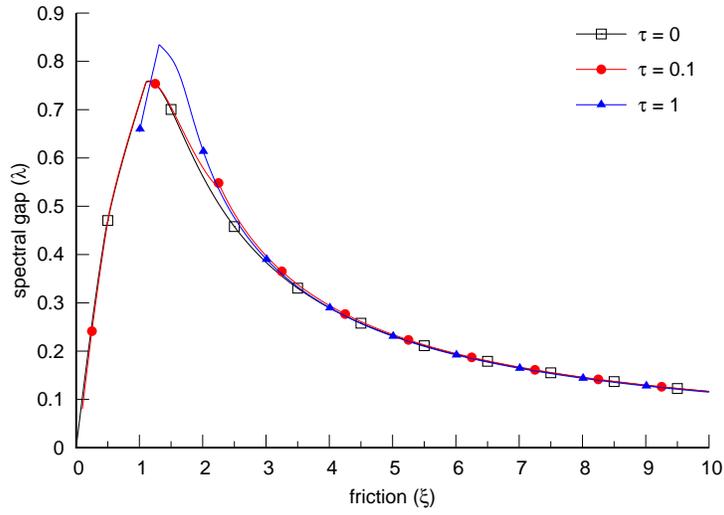}
\end{center}
\caption{\label{fig:sg_tau_U1} Spectral gap as a function of~$\xi$ for $\tau = 0, 0.1, 1$ when $U(q) = 1-\cos(q)$.}
\end{figure}

\begin{figure}
\begin{center}
\includegraphics[width=0.48\textwidth]{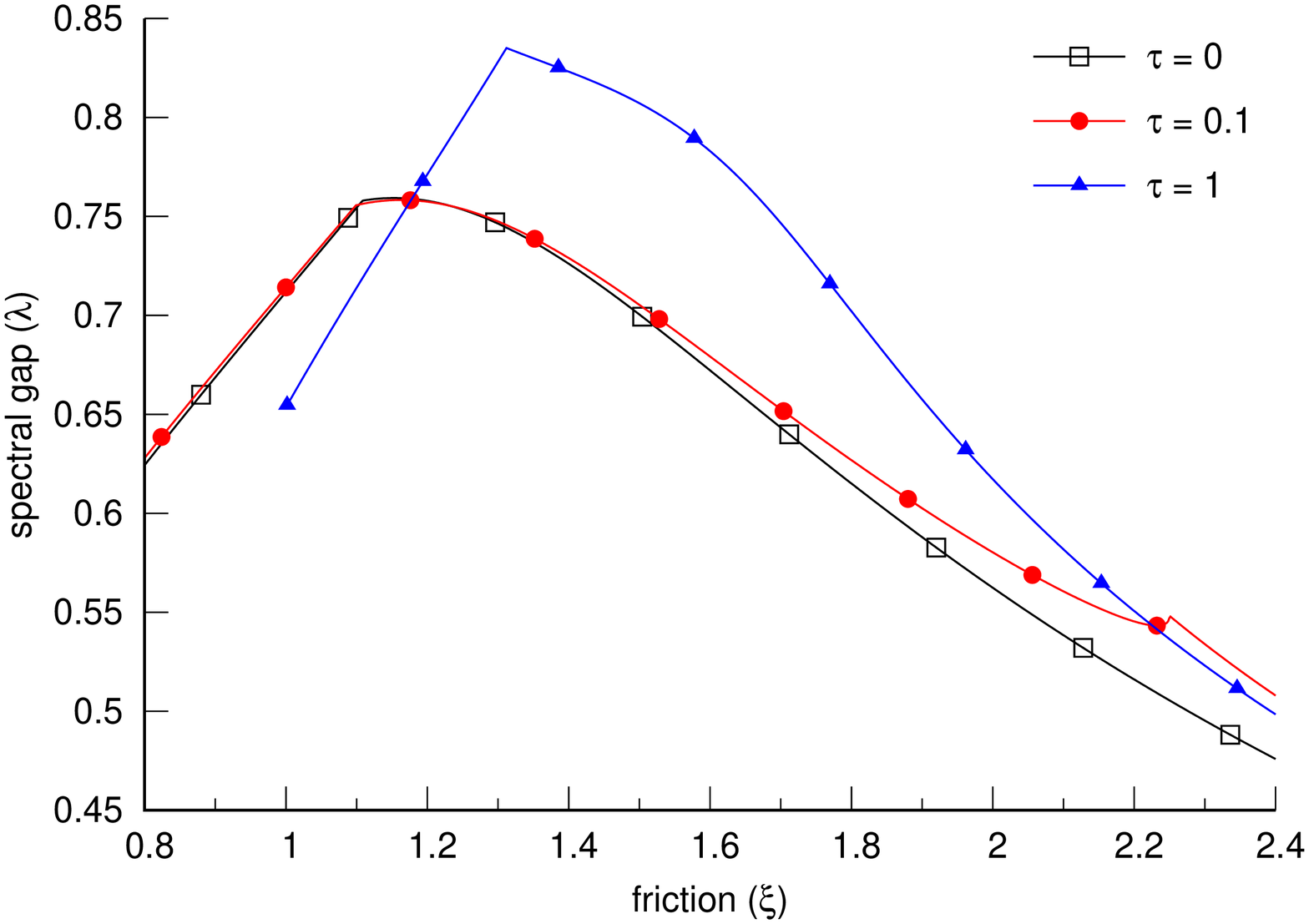}
\includegraphics[width=0.48\textwidth]{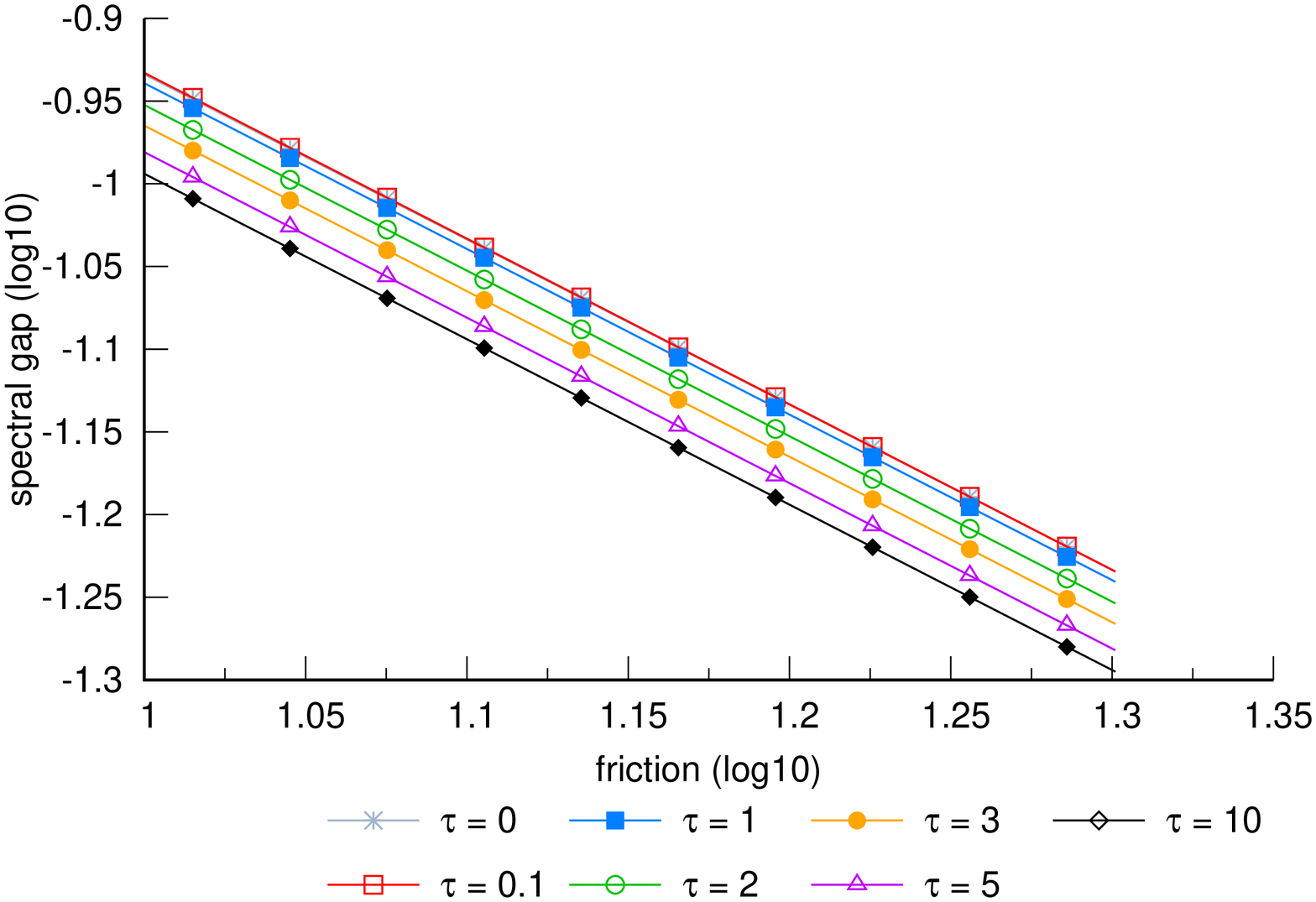}
\end{center}
\caption{\label{fig:sg_tau_U1_zoom} Zoom of Figure~\ref{fig:sg_tau_U1} around $\xi = 1$ (Left) and large $\xi$ (Right; logarithmic scale).}
\end{figure}

We next study the influence of the potential by computing the spectral gap for various values of $\tau,\xi$ when the potential is $U(q) = U_0(1-\cos(q))$. The results for $U_0 = 0.1$ and $U_0 = 10$ are reported in Figure~\ref{fig:sg_U}. Note that the effect of the perturbation is less visible on these pictures: the spectral gaps are much closer to the ones corresponding to $\tau = 0$. For $U_0$ small, this is related to the fact that the spectral gap is independent of~$\tau$ when $U_0 = 0$ (see~\cite{Iacobucci} for further precisions), a feature which approximately persists for small but non-zero values of $U_0$. For large~$U_0$, the perturbation $\tau F$ is dominated by the gradient part of the force $U'(q)$; hence it has less impact on the spectral gap.

\begin{figure}
\begin{center}
\includegraphics[width=0.48\textwidth]{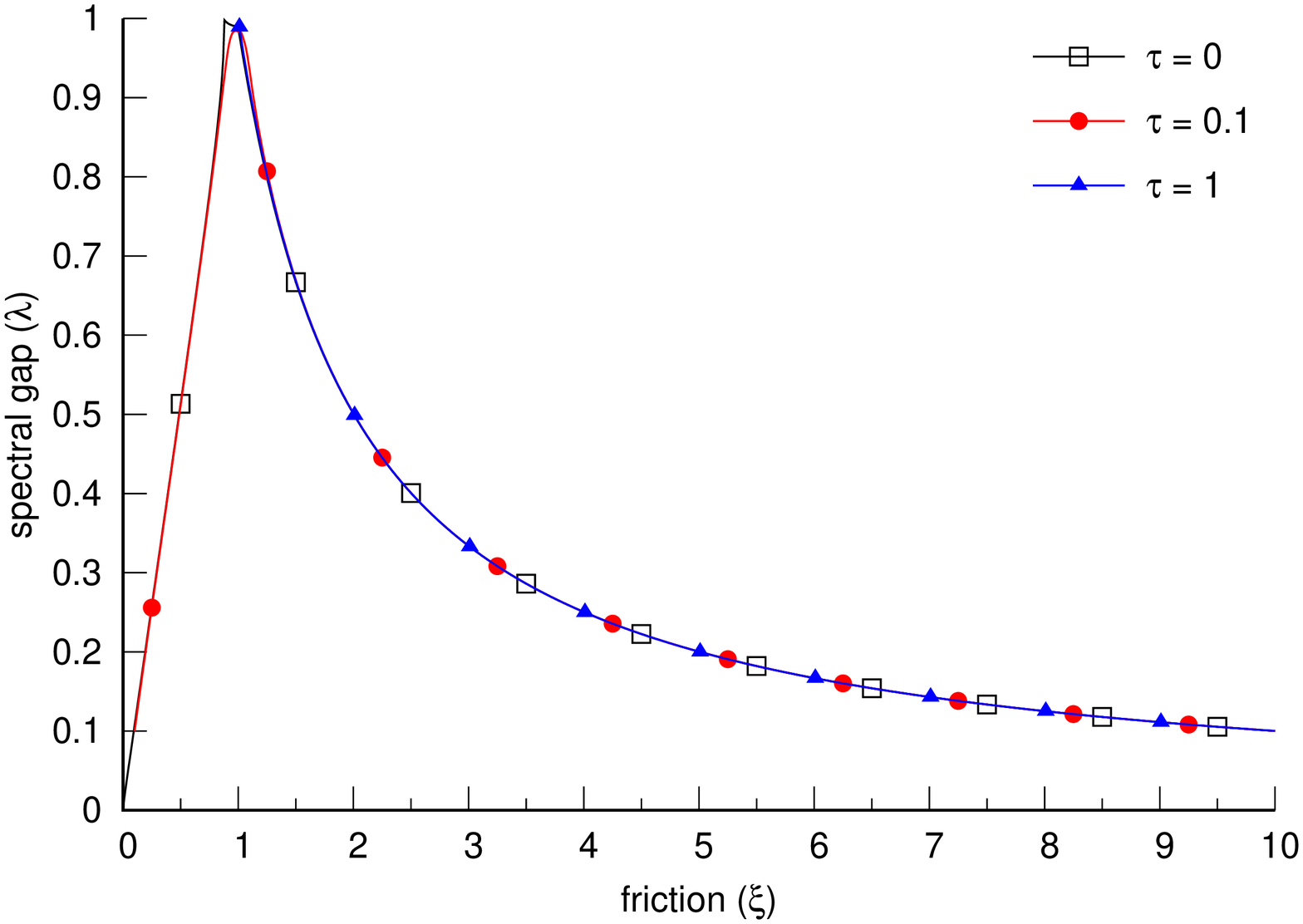}
\includegraphics[width=0.48\textwidth]{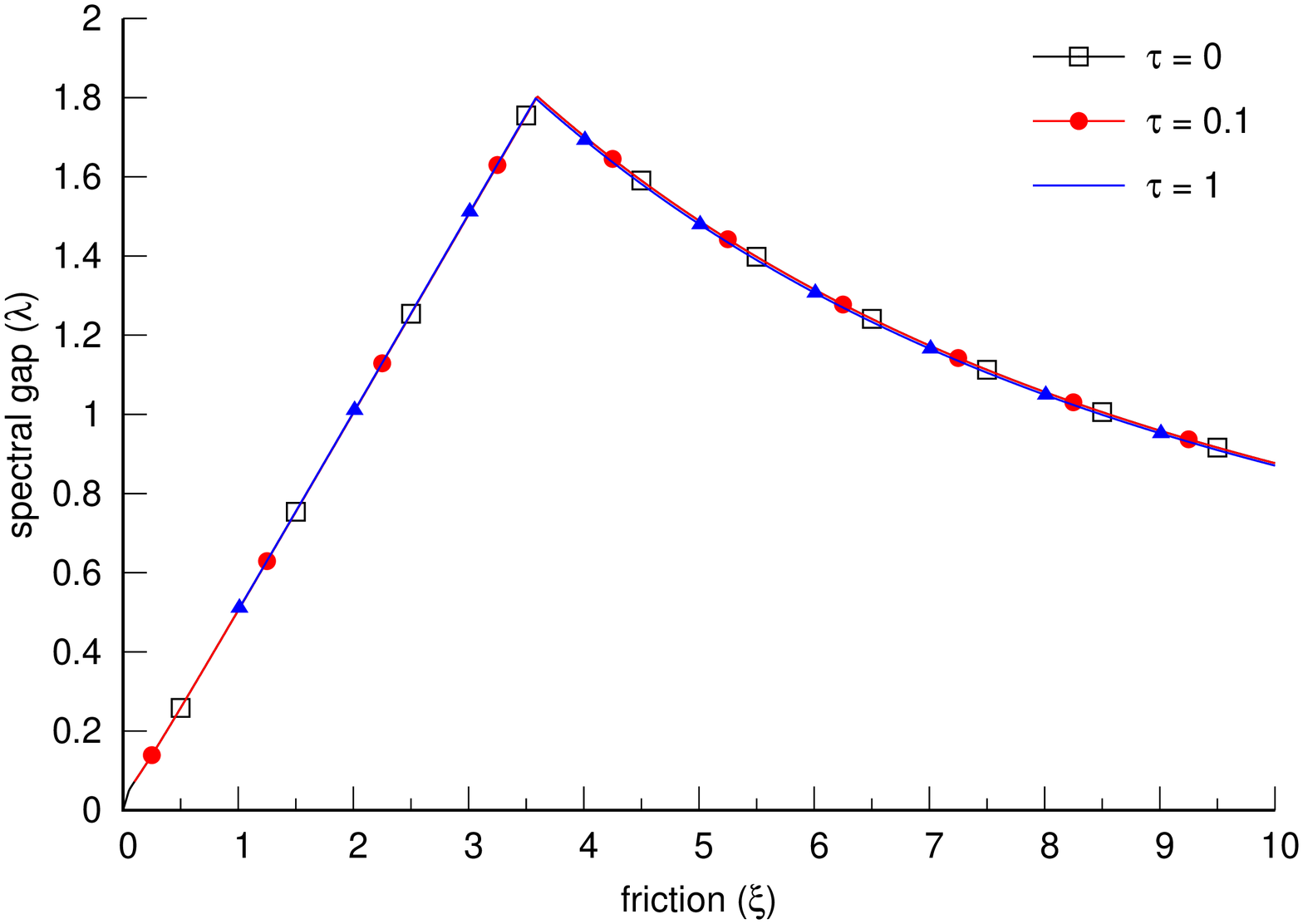}
\end{center}
\caption{\label{fig:sg_U} Spectral gap as a function of~$\xi$ for $\tau = 0, 0.1, 1$ when $U(q) = U_0(1-\cos(q))$. Left: $U_0 = 0.1$. Right: $U_0 = 10$.}
\end{figure}

\section{Proofs of the results}
\label{sec:proofs}

We introduce the marginal measure in the position variable: 
\[
\nu(dq) = \int_{\mathbb{R}^d} \mu(dq\,dp) = Z_\nu^{-1} \rme^{-\beta U(q)} \, dq.
\]
Recall also the definition~\eqref{eq:kappa} for the marginal~$\kappa(dq)$ in the momentum variable. These measures satisfy the following Poincar\'e inequalities:
\begin{equation}
  \label{eq:Poincare_nu}
  \forall \varphi \in H^1(\nu)\cap L^2_0(\nu), 
  \qquad 
  \| \varphi \|_{L^2(\nu)} \leq \frac{1}{K_\nu} \| \nabla_q \varphi  \|_{L^2(\nu)}, 
\end{equation}
and
\begin{equation}
  \label{eq:Poincare_kappa}
  \forall \phi \in H^1(\kappa)\cap L^2_0(\kappa), 
  \qquad
  \| \phi \|_{L^2(\kappa)} \leq \frac{1}{K_\kappa} \| \nabla_p \phi  \|_{L^2(\kappa)}. 
\end{equation}
In fact, $K_\kappa = \sqrt{\beta/m}$. The measure $\mu$ therefore also satisfies a Poincar\'e inequality:
\begin{equation}
  \label{eq:Poincare_total}
  \forall g \in H^1(\mu) \cap L^2_0(\mu), \qquad \| g \|_\Ltwo^2 \leq \frac{1}{K_\nu^2} \| \nabla_q g \|^2_\Ltwo + \frac{1}{K_\kappa^2} \| \nabla_p g \|^2_\Ltwo.
\end{equation}
This also implies 
\begin{equation}
  \label{eq:Poincare_total_2}
  \forall g \in H^1(\mu) \cap L^2_0(\mu), \qquad \| g \|_\Ltwo \leq \frac{1}{K_\nu} \| \nabla_q g \|_\Ltwo + \frac{1}{K_\kappa} \| \nabla_p g \|_\Ltwo.
\end{equation}

In all the proofs below, we start by fixing a function~$f$ in the Hilbert space under consideration, and define
\begin{equation}
  \label{eq:f(t)}
  f(t) = \rme^{t \cLxt^*} f - h_\tau.
\end{equation}
Note that, to simplify the notation, we omit the dependence of $f(t)$ on~$\xi,\tau$. We also consider that $\cLxt$ is defined on~$L^2(\mu)$, with domain $D(\cLxt) = \{ f \in L^2(\mu), \ \cLxt f \in L^2(\mu) \}$. In all cases under consideration, $f(t)$ has average~0 with respect to~$\mu$ for all times $t \geq 0$. 

We will also make use of the following remark: the smallest eigenvalue of the positive definite matrix 
\[
\mathcal{M} = \begin{pmatrix} a & b/2 \\ b/2 & c \end{pmatrix}
\]
is
\begin{equation}
  \label{eq:smallest_eig}
  \Lambda_-(\mathcal{M}) = \frac{a+c}{2} - \frac12 \sqrt{(a-c)^2 + b^2} = \frac{4ac - b^2}{a+c + \sqrt{(a-c)^2 + b^2}}.
\end{equation}

\subsection{Proof of Theorem~\ref{thm:hypo_H1_like}}
\label{sec:proof_H1_like}

Formally, 
\[
\frac{d}{dt}\left(\frac12 \| f(t)\|^2_{a(\xi)}\right) = \la f(t), \cLxt^* f(t) \raa.
\]
The exponential decay therefore follows by a Gronwall inequality provided
\[
\forall g \in D(\cLxt^*) \cap L^2_0(\mu), \qquad \la g, \cLxt^* g\raa \leq -\lambda(\xi,\tau) \| g \|^2_{a(\xi)},
\]
for some $\lambda(\xi,\tau) > 0$. Let us establish this inequality by considering a smooth and compactly supported function~$g$ with average~0 with respect to~$\mu$, and then conclude by density. By the computations recalled in~\cite[Section~2.2.3]{LS16} (which correspond to the equilibrium case $\tau = 0$),
\begin{equation}
  \label{eq:control_H1_std}
\la g, \cL_{\xi,0}^* g \raa \leq -X^T S(\xi) X - \frac{\xi a(\xi)}{\beta} \|(\nabla_p+\nabla_q)\nabla_p g\|^2_\Ltwo,
\end{equation}
with
\[
X = \begin{pmatrix} \| \nabla_p g\|_{L^2(\mu)} \\ \| \nabla_q g\|_{L^2(\mu)} \end{pmatrix}\!, 
\quad
S(\xi) = \begin{pmatrix} 
S_{pp}(\xi) \Id_d & S_{qp}(\xi) \Id_d/2\\
S_{qp}(\xi) \Id_d/2 & S_{qq}(\xi) \Id_d
\end{pmatrix},
\]
the elements of the matrix $S$ being
\[
S_{pp}(\xi) = \xi\left(\frac1\beta+\frac{a(\xi)}{m}\right) - a(\xi) \|\nabla^2 U\|_{L^\infty}, 
\quad 
S_{qp}(\xi) = -a(\xi)\left( \frac{1+\xi}{m} + \|\nabla^2U\|_{L^\infty}\right), 
\quad 
S_{qq}(\xi) = \frac{a(\xi)}{m}.
\]
In addition, the perturbation term can be bounded as
\[
\left|\la g, \cLpert^* g \raa\right| \leq \left|\la \cLpert g, g \ra\right| + a(\xi) \left|\la (\nabla_q+\nabla_p)g, (\nabla_q+\nabla_p) F \cdot \nabla_p^*g \raa\right|.
\]
Since $(\partial_{q_i}+\partial_{p_i})\partial_{p_j}^*g = \partial_{p_j}^*(\partial_{q_i}+\partial_{p_i})g + \delta_{ij} \beta g/m$, it follows that 
\[
\begin{aligned}
\left|\la g, \cLpert^* g \raa\right| & \leq \| g \|_\Ltwo\| \nabla_p g \|_\Ltwo + a(\xi) \| (\nabla_p+\nabla_q) g \|_\Ltwo\left(\| (\nabla_p+\nabla_q) \nabla_p g \|_\Ltwo + \frac{\beta}{m}\| g \|_\Ltwo\right) \\
& \leq \| g \|_\Ltwo \left[ \left(1+\frac{\beta a(\xi)}{m}\right) \| \nabla_p g \|_\Ltwo + \frac{\beta a(\xi)}{m}\| \nabla_q g \|_\Ltwo\right] + \eta a(\xi) \left( \| \nabla_p g \|_\Ltwo + \| \nabla_q g \|_\Ltwo \right)^2 \\
& \qquad + \frac{a(\xi)}{4\eta} \| (\nabla_p+\nabla_q) \nabla_p g \|_\Ltwo^2,
\end{aligned}
\]
for any $\eta>0$ by Young's inequality. Therefore, using~\eqref{eq:Poincare_total_2}, 
\begin{equation}
  \label{eq:control_H1_pert}
  \left|\la g, \cLpert^* g \raa\right| \leq X^T T(\xi,\eta) X + \frac{a(\xi)}{4\eta} \| (\nabla_p+\nabla_q) \nabla_p g \|_\Ltwo^2,
\end{equation} 
where
\[
T(\xi,\eta) = \begin{pmatrix} 
  T_{pp}(\xi,\eta) \Id_d & T_{qp}(\xi,\eta) \Id_d /2 \\
  T_{qp}(\xi,\eta) \Id_d / 2 & T_{qq}(\xi,\eta) \Id_d
\end{pmatrix},
\]
the elements of the matrix $T$ being
\[
T_{pp}(\xi,\eta) = \frac{1}{K_\kappa}\left(1+\frac{\beta a(\xi)}{m}\right) + \eta a(\xi),
\quad 
T_{qp}(\xi,\eta) = \frac{1}{K_\nu}\left(1+\frac{\beta a(\xi)}{m}\right) + \frac{\beta a(\xi)}{m K_\kappa} + 2\eta a(\xi),
\quad 
T_{qq}(\xi,\eta) = \frac{\beta a(\xi)}{m K_\nu} + \eta a(\xi).
\]

Finally, in view of the Poincar\'e inequality~\eqref{eq:Poincare_total}, the norm $\| \cdot \|_{a(\xi)}$ can be controlled by $|X|$ as
\[
\| g \|_{a(\xi)}^2 \leq X^T P(\xi) X, \qquad P(\xi) = \begin{pmatrix} \dps a(\xi) + \frac{1}{K_\kappa^2} & a(\xi) \\ a(\xi) & \dps a(\xi) + \frac{1}{K_\nu^2}\end{pmatrix}.
\]
The comparison between~\eqref{eq:control_H1_std} and~\eqref{eq:control_H1_pert} suggests choosing $\eta = \beta|\tau|/(4\xi)$. The constant $\lambda(\xi,\tau)$ can therefore be chosen as the largest real number such that 
\begin{equation}
  \label{eq:condition_matrix_H1}
  S(\xi) - |\tau| T\left(\xi,\frac{\beta|\tau|}{4\xi}\right) \geq \lambda(\xi,\tau)P(\xi).
\end{equation}

Let us now make this condition more explicit by distinguishing the two limiting regimes $\xi \to 0$ and $\xi \to +\infty$.
\begin{enumerate}[(i)]
\item When $\xi \to 0$, the condition that $S$ should be positive definite requires in particular that $S_{pp}(\xi)>0$. This means that $a(\xi)$ should be sufficiently small, in fact at most of order~$\xi$. We therefore consider $a(\xi) = \overline{a}_0 \xi + \mathrm{O}(\xi^2)$. The condition~\eqref{eq:condition_matrix_H1} then reduces to
\[
\begin{aligned}
& \begin{pmatrix}
\dps \xi\left(\frac{1}{\beta} - \overline{a}_0\| \nabla^2 U \|_{L^\infty}\right) - |\tau| \left(\frac{1}{K_\kappa}+\frac{\beta\overline{a}_0|\tau|}{4}\right) & \dps -\frac{\overline{a}_0\xi}{2}\left(\frac1m + \| \nabla^2 U \|_{L^\infty}\right) - \frac{|\tau|}{2K_\nu} - \frac{\beta\overline{a}_0|\tau|^2}{4} \\[10pt]
\dps -\frac{\overline{a}_0\xi}{2}\left(\frac1m + \| \nabla^2 U \|_{L^\infty}\right) - \frac{|\tau|}{2K_\nu} - \frac{\beta\overline{a}_0|\tau|^2}{4} & \dps \frac{\overline{a}_0 \xi}{m} - \frac{\beta\overline{a}_0|\tau|^2}{4}\\
\end{pmatrix} + \mathrm{O}\left(\xi^2, \xi|\tau|\right) \\
& \qquad \qquad \geq  \lambda(\xi,\tau)\left[\begin{pmatrix} \dps \frac{1}{K_\kappa^2} & 0 \\ 0 & \dps \frac{1}{K_\nu^2}\end{pmatrix} + \mathrm{O}(\xi)\right].
\end{aligned}
\]  
It is then clear (by considering for instance the first element on the matrix on the left-hand side) that $\tau$ can be chosen to be at most of order~$\xi$, which we write as $|\tau| = \delta \xi$. The above inequality then reduces to
\[
\begin{pmatrix}
\dps \frac{1}{\beta} - \overline{a}_0\| \nabla^2 U \|_{L^\infty} - \frac{\delta}{K_\kappa} & \dps -\frac{\overline{a}_0}{2}\left(\frac1m + \| \nabla^2 U \|_{L^\infty}\right) - \frac{\delta}{2K_\nu}\\
\dps -\frac{\overline{a}_0}{2}\left(\frac1m + \| \nabla^2 U \|_{L^\infty}\right) - \frac{\delta}{2K_\nu} & \dps \frac{\overline{a}_0}{m}  \\
\end{pmatrix} + \mathrm{O}(\xi) \geq \frac{\lambda(\xi,\delta \xi)}{\xi} \left[\begin{pmatrix} \dps \frac{1}{K_\kappa^2} & 0 \\ 0 & \dps \frac{1}{K_\nu^2}\end{pmatrix} + \mathrm{O}(\xi)\right].
\]
We next choose $\delta>0$ sufficiently small and then $\overline{a}_0$ sufficiently small in order to satisfy the above inequality. In any case, $\lambda(\xi,\tau)$ is of order~$\xi$. Moreover, from the above matrix expression, it is clear that $[\lambda(\xi,\delta\xi) - \lambda(\xi,0)]/\xi = \mathrm{O}(\delta)$.
\item When $\xi \to +\infty$, the limitation on $a(\xi)$ arises from the fact that the determinant of $S(\xi)$ should be positive, which requires $\xi a(\xi)$ to be bounded. We therefore consider $a(\xi) = \overline{a}_\infty/\xi + \mathrm{O}(\xi^{-2})$.The condition~\eqref{eq:condition_matrix_H1} then reduces to
\[
\begin{pmatrix}
\dps \frac{\xi}{\beta} & \dps -\frac{\overline{a}_\infty}{2m} \\[10pt]
\dps -\frac{\overline{a}_\infty}{2m} & \dps \frac{\overline{a}_\infty}{m\xi} \\
\end{pmatrix} -|\tau| \begin{pmatrix} \dps \frac{1}{K_\kappa} & \dps \frac{1}{2K_\nu} \\[10pt] \dps \frac{1}{2K_\nu} & \dps 0\end{pmatrix} + \mathrm{O}\left(\frac{1}{\xi^2}, \frac{|\tau|}{\xi}\right)
\geq  \lambda(\xi,\tau)\left[\begin{pmatrix} \dps \frac{1}{K_\kappa^2} & 0 \\ 0 & \dps \frac{1}{K_\nu^2}\end{pmatrix} + \mathrm{O}\left(\frac{1}{\xi}\right)\right].
\]  
In view of~\eqref{eq:smallest_eig}, the smallest eigenvalue of the first matrix on the left-hand side is of order~$1/\xi$, so that $\tau$ can be at most of order~$1/\xi$, which we write as $\tau = \delta/\xi$. The above inequality then shows that $\lambda(\xi,\delta/\xi)$ is of order~$1/\xi$, and that $\xi [\lambda(\xi,\delta\xi) - \lambda(\xi,0)] = \mathrm{O}(\delta)$.
\end{enumerate}

\subsection{Proof of Theorem~\ref{thm:hypo_L2_like}}
\label{sec:proof_hypo_L2_like}

The key estimate for proving the result is the following hypoelliptic regularization result.

\begin{proposition}
  \label{prop:hypoelliptic_reg}
  There exist $K,\delta_* > 0$ such that, for any $\xi >0$ and $\tau \in [0,\delta_*\min(\xi,1)]$, 
  \[
  \forall g \in L^2(\mu), \quad \forall 0 < t \leq 1, \qquad \left\| \nabla_p \rme^{t \cLxt^*} g \right\|_{L^2(\mu)} +  \left\| \nabla_q \rme^{t \cLxt^*}g \right\|_{L^2(\mu)} \leq \frac{K \max\left(\xi,\xi^{-1}\right)}{t^{3/2}}  \| g \|_{L^2(\mu)}.
  \]
\end{proposition}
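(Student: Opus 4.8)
The plan is to adapt the short-time hypoelliptic regularization estimate of Hérau--Nier and Villani (see~\cite{HN04,Villani09} and the equilibrium computations in~\cite{HP08,LS16}) to the nonequilibrium setting, keeping careful track of the dependence on~$\xi$ and~$\tau$. By density it suffices to prove the bound for a smooth, compactly supported~$g$, for which $t \mapsto \rme^{t\cLxt^*}g$ is $C^\infty$ in space and time by hypoellipticity, so that the squared gradients are bounded near $t=0$; the general case then follows by passing to the limit in the final inequality. Since $\nabla_p$ and $\nabla_q$ annihilate constants we may also take $g \in L^2_0(\mu)$. Write $f(t) = \rme^{t\cLxt^*}g$, $u(t) = \nabla_p f(t)$ and $v(t) = \nabla_q f(t)$. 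The first step is to derive the evolution equations for $u$ and $v$ from the commutators of $\nabla_p,\nabla_q$ with the three pieces of~$\cLxt^*$ in~\eqref{eq:16}: one has $[\nabla_p,-\cLham] = -m^{-1}\nabla_q$, $[\nabla_p,\cLFD] = -m^{-1}\nabla_p$, $[\nabla_q,-\cLham] = (\nabla^2 U)\nabla_p$, $[\nabla_q,\cLFD] = 0$, while the $\tau$-dependent part $\tau(-F\cdot\nabla_p + \beta m^{-1}F\cdot p)$ contributes only a zeroth-order term. This yields
\[
\partial_t u = \cLxt^* u - \frac1m v - \frac{\xi}{m}\, u + \frac{\tau\beta}{m}\, F f,
\qquad
\partial_t v = \cLxt^* v + (\nabla^2 U)\, u,
\]
with all coefficients other than $\xi/m$ bounded uniformly in $\xi$ thanks to the smoothness of $U$ on the torus.

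Second, I would introduce the time-weighted functional
\[
N(t) = \| f(t)\|_\Ltwo^2 + \alpha(\xi)\, t\, \|u(t)\|_\Ltwo^2 + 2\gamma(\xi)\, t^2\, \la u(t), v(t) \ra + \kappa(\xi)\, t^3\, \|v(t)\|_\Ltwo^2,
\]
with positive weights $\alpha,\gamma,\kappa$ chosen so that $\gamma^2 < \alpha\kappa$, hence so that the quadratic form in $(u,v)$ is positive. Differentiating along the flow, the terms $\la w,\cLxt^* w\ra = -\frac\xi\beta\|\nabla_p w\|_\Ltwo^2 + \frac{\tau\beta}{2m}\int(F\cdot p)w^2\,d\mu$ (for $w=f,u,v$) provide dissipation in the $\nabla_p$ directions, the last term being controlled by integration by parts in $p$ and Young's inequality; the decisive negative contribution $-\tfrac{2\gamma}{m}t^2\|v\|_\Ltwo^2$, which is the hypocoercive mechanism producing dissipation for $\nabla_q f$, comes from the cross term $\la -\tfrac1m v, v\ra$ in $\partial_t\la u,v\ra$; and differentiating the explicit powers of $t$ produces $\alpha\|u\|_\Ltwo^2 + 4\gamma t\la u,v\ra + 3\kappa t^2\|v\|_\Ltwo^2$. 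Bounding the remaining terms (those with $\nabla^2 U$, the gain $-\tfrac{\xi}{m}\|u\|_\Ltwo^2$, and the $\mathrm{O}(\tau)$ remainders, using~\eqref{eq:Poincare_total}--\eqref{eq:Poincare_total_2} to absorb the multiplicative factor~$p$) as in the proof of Theorem~\ref{thm:hypo_H1_like}, one arrives, for $t\in(0,1]$, at $N'(t) \le X^T M(\xi,\tau,t) X$ for a symmetric matrix $M$ and $X=(\|u(t)\|_\Ltwo,\|v(t)\|_\Ltwo)$; it then suffices to verify $M \le 0$, which by~\eqref{eq:smallest_eig} and a scaling analysis in the regimes $\xi\to0$ and $\xi\to+\infty$ holds once $\alpha,\gamma,\kappa$ are scaled by suitable powers of $\mx$ and $\tau \le \delta_*\min(\xi,1)$ with $\delta_*$ small enough.

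Once $N'(t)\le 0$ on $(0,1]$ is established, we get $N(t) \le \lim_{s\to0^+}N(s) = \|g\|_\Ltwo^2$, and the positivity of the quadratic form in $(u,v)$ yields $\alpha(\xi)\, t\|u(t)\|_\Ltwo^2 \le C\|g\|_\Ltwo^2$ and $\kappa(\xi)\, t^3\|v(t)\|_\Ltwo^2 \le C\|g\|_\Ltwo^2$. Hence $\|\nabla_p f(t)\|_\Ltwo \le C\alpha(\xi)^{-1/2}t^{-1/2}\|g\|_\Ltwo$ and $\|\nabla_q f(t)\|_\Ltwo \le C\kappa(\xi)^{-1/2}t^{-3/2}\|g\|_\Ltwo$; since $t^{-1/2}\le t^{-3/2}$ for $t\le1$ and the chosen scalings make $\alpha(\xi)^{-1/2}$ and $\kappa(\xi)^{-1/2}$ of order $\max(\xi,\xi^{-1})$, adding the two bounds gives the announced estimate.

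The main obstacle I anticipate is the bookkeeping of the $\xi$-scalings: one must select $\alpha(\xi),\gamma(\xi),\kappa(\xi)$ — with the correct relative powers of $\xi$, in the spirit of $a(\xi)$ in Theorem~\ref{thm:hypo_H1_like} — so that simultaneously (i) the $\nabla_p$-dissipation of strength $\xi/\beta$ dominates the Hamiltonian mixing terms, (ii) the indirect dissipation $-\tfrac{2\gamma}{m}t^2\|v\|_\Ltwo^2$ beats the $3\kappa t^2\|v\|_\Ltwo^2$ generated by the time weight, and (iii) the $\mathrm{O}(\tau)$ terms are absorbed uniformly over $\tau \le \delta_*\min(\xi,1)$, and this in both asymptotic regimes $\xi\to0$ and $\xi\to+\infty$. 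The rest is a routine, if lengthy, variant of the standard regularization computation.
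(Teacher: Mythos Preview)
Your approach is essentially the one taken in the paper: the same time-weighted functional $N(t)$ (the paper's coefficients $A(\xi),B(\xi),C(\xi)$ are your $\alpha,\gamma,\kappa$, all chosen of order $\min(\xi,\xi^{-1})$), the same commutator identities, and the same scaling discussion in the regimes $\xi\to0$ and $\xi\to+\infty$. Two places where your sketch is looser than the paper's execution: (i) $N'(t)$ cannot be bounded by a quadratic form in $(\|u\|,\|v\|)$ alone, since the dissipative parts $\langle w,\cLxt^* w\rangle$ for $w=u,v$ together with the cross term generate a \emph{second} quadratic form in $(\|\nabla_p^2 f\|,\|\nabla_{qp}^2 f\|)$ whose sign (including its $\tau$-perturbation, which is precisely where the constraint $|\tau|\lesssim\min(\xi,1)$ enters) must be checked separately; (ii) the zeroth-order term $\tfrac{\tau\beta}{m}Ff$ in $\partial_t u$ yields a contribution controlled by $|\tau|\,\|f\|\,\|u\|\lesssim|\tau|N(t)$, which does not fit into your matrix $M$, and accordingly the paper only obtains $N'(t)\le R|\tau|N(t)$ and concludes by Gronwall rather than $N'\le 0$. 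Neither point is a genuine obstacle.
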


Note that the hypoelliptic regularization is possible for values of $\tau$ of order~1 when $\xi$ is large, and it is therefore not this step which limits the range of admissible forcings in Theorem~\ref{thm:hypo_L2_like}.

As a corollary of Proposition~\ref{prop:hypoelliptic_reg}, there exists a constant $\widetilde{K}>0$ such that $\| f(t) \|^2_{a(\xi)} \leq \widetilde{K} t^{-3/2} \| f \|_\Ltwo$ for $0 < t \leq 1$ when the product $a(\xi) \max(\xi,\xi^{-1})$ is bounded, which is the case for the function~$a(\xi)$ considered in Theorem~\ref{thm:hypo_H1_like}. Combining this inequality and Theorem~\ref{thm:hypo_H1_like} with $t_0 = 1$, for instance, we can conclude that, for $t \geq 1$, and any $\xi \in (0,+\infty)$ and $\tau \in [-\delta \mx,\delta \mx]$
\[
\| f(t) \|^2_{L^2(\mu)} \leq \| f(t) \|^2_{a(\xi)} \leq \rme^{-2\overline{\lambda}_\delta \min(\xi,\xi^{-1})(t-t_0)} \| f(t_0) \|^2_{a(\xi)} 
\leq \widetilde{K} \, \rme^{-2\overline{\lambda}_\delta \min(\xi,\xi^{-1})(t-t_0)} \| f(0) \|^2_{L^2(\mu)},
\]
which gives the claimed exponential decay in $L^2(\mu)$. 

\begin{proof}[Proof of Proposition~\ref{prop:hypoelliptic_reg}]
  We denote by $g(t) = \rme^{t \cLxt^*}g$, and introduce 
  \[
  N_g(t) = \frac12\left( \| g(t) \|^2_\Ltwo + A(\xi)t \| \nabla_p g(t) \|^2_\Ltwo+ 2B(\xi)t^2 \la \nabla_p g(t), \nabla_q g(t) \ra + C(\xi)t^3 \| \nabla_q g(t) \|^2_\Ltwo\right),
  \]
  where $A(\xi),C(\xi)$ are positive and $A(\xi)C(\xi)-B(\xi)^2>0$. The result follows provided $N_g(t)$ has a controlled increase (in the sense that $N_g(1)$ can be controlled by $N_g(0)$ up to a multiplicative factor) and $A(\xi)C(\xi)-B(\xi)^2$ is of order $\min(\xi,\xi^{-1})$. 
  
  By computations similar to the ones leading to~\eqref{eq:control_H1_std} (see for instance the proof of~\cite[Theorem~2.18]{LS16}), the following inequality holds when $B(\xi) > 3 m C(\xi)/2$:
  \[
  \begin{aligned}
    \frac{dN_g(t)}{dt} & \leq -\left(\frac{\xi}{\beta} + A(\xi)\left(\frac{\xi t}{m}-\frac12\right)-\|\nabla^2 U\|_{L^\infty}B(\xi)t^2\right) \left\|\nabla_p g(t)\right\|_\Ltwo^2 - \left(\frac{B(\xi)}{m}-\frac{3C(\xi)}{2}\right)t^2 \left\|\nabla_q g(t)\right\|_\Ltwo^2 \\
    & \quad + t \left( 2B(\xi) + \frac{A(\xi)}{m} + \frac{\xi B(\xi)t}{m} + C(\xi)\|\nabla^2 U\|_{L^\infty}t^2 \right) \left\|\nabla_p g(t)\right\|_\Ltwo \left\|\nabla_q g(t)\right\|_\Ltwo\\
    & \quad - \frac{\xi t}{\beta} \left(A(\xi) \left\|\nabla_p^2 g(t)\right\|_\Ltwo^2 - 2B(\xi)t \| \nabla_p^2 g(t) \|_\Ltwo \| \nabla_{qp}^2 g(t) \|_\Ltwo + C(\xi)t^2 \left\|\nabla_{qp}^2 g(t)\right\|_\Ltwo^2 \right) \\
    & \quad + \tau \widetilde{N}_g(t), 
  \end{aligned}
  \]
  with
  \[
  \begin{aligned}
  \widetilde{N}_g(t) & = \la g(t),\cLpert^* g(t)\ra + A(\xi)t \la \nabla_p g(t),\nabla_p \cLpert^* g(t)\ra + C(\xi)t^3 \la \nabla_q g(t),\nabla_q \cLpert^* g(t)\ra \\
& \quad + B(\xi)t^2 \left( \la \nabla_q g(t),\nabla_p \cLpert^* g(t)\ra + \la \nabla_q \cLpert^* g(t),\nabla_p g(t)\ra \right).
  \end{aligned}
  \]
Note that, using $[\partial_{p_i}, \cLpert^*] = \beta F_i/m$,
\[
\begin{aligned}
  \left|\widetilde{N}_g(t)\right| & \leq \| \nabla_p g(t) \|_\Ltwo \| g(t) \|_\Ltwo +  A(\xi)t \| \nabla_p g(t) \|_\Ltwo\left(\frac{\beta}{m}\| g(t) \|_\Ltwo + \| \nabla_p^2 g(t) \|_\Ltwo\right) \\
& \quad + C(\xi)t^3 \| \nabla_q g(t) \|_\Ltwo \| \nabla_{qp}^2 g(t) \|_\Ltwo \\
& \quad + B(\xi)t^2 \left( \| \nabla_p g(t) \|_\Ltwo \| \nabla_{qp}^2 g(t) \|_\Ltwo + \| \nabla_q g(t) \|_\Ltwo \left[ \frac{\beta}{m}\| g(t) \|_\Ltwo + \| \nabla_{p}^2 g(t) \|_\Ltwo\right] \right).
\end{aligned}
\]
We next recall that $\|g(t)\|_\Ltwo^2 \leq 2 N_g(t)$ and use Cauchy--Schwarz inequalities\footnote{Although sharper results may be obtained with Young inequalities, the final scaling of admissible values of $\tau$ in terms of~$\xi$ is unaffected.} such as 
\[
t^2 \| \nabla_p g(t) \|_\Ltwo \| \nabla_{qp}^2 g(t) \|_\Ltwo \leq \frac{t}{2} \Big(\| \nabla_p g(t) \|_\Ltwo^2 + t^2 \| \nabla_{qp}^2 g(t) \|_\Ltwo\Big).
\]
This allows to bound $\widetilde{N}_g(t)$ as follows:
\[
\begin{aligned}
  \left|\widetilde{N}_g(t)\right| & \leq \left(1 + \frac{\beta t(A(\xi)+B(\xi))}{m} \right) N_g(t) \\
& \quad +  \frac12 \left(1  + A(\xi)t \left[1+\frac{\beta}{m}\right] + B(\xi)t \right) \| \nabla_p g(t) \|_\Ltwo^2 + \frac12 \left(C(\xi)+B(\xi)\left[1+\dfrac\beta m\right]\right) t^3 \| \nabla_q g(t) \|_\Ltwo^2 \\
& \quad + \frac12 \left( A(\xi) + B(\xi) \right)t\| \nabla_p^2 g(t) \|_\Ltwo^2 + \frac12 \left( B(\xi) + C(\xi) \right)t^3 \| \nabla_{qp}^2 g(t) \|_\Ltwo^2. 
\end{aligned}
\]
By combining the latter inequality and the above bound on $dN_g/dt$, it follows that, for any $t \in [0,1]$,
\[
\begin{aligned}
  \frac{dN_g(t)}{dt} & \leq \left(1 + \frac{\beta (A(\xi) + B(\xi))}{m}\right) |\tau| N_g(t) \\
  & \quad -\left(\frac{\xi}{\beta} - \frac{A(\xi)}{2} - \|\nabla^2 U\|_{L^\infty}B(\xi) - \frac{|\tau|}{2} \left[1+ A(\xi) \left(1+\frac{\beta}{m}\right) + B(\xi)\right]\right) \left\|\nabla_p g(t)\right\|_\Ltwo^2 \\
& \quad - t^2 \left(\frac{B(\xi)}{m}-\frac{3C(\xi)}{2}-\frac{|\tau|}{2}\left[ C(\xi) + B(\xi)\left(1+\dfrac\beta m\right)\right]\right) \left\|\nabla_q g(t)\right\|_\Ltwo^2 \\
  & \quad + t \left( 2B(\xi) + \frac{A(\xi)}{m} + \frac{\xi B(\xi)}{m} + C(\xi)\|\nabla^2 U\|_{L^\infty} \right) \left\|\nabla_p g(t)\right\|_\Ltwo \left\|\nabla_q g(t)\right\|_\Ltwo\\
  & \quad - \dfrac{\xi t}{\beta} \left(A(\xi) \left\|\nabla_p^2 g(t)\right\|_\Ltwo^2 - 2B(\xi)t \| \nabla_p^2 g(t) \|_\Ltwo \| \nabla_{qp}^2 g(t) \|_\Ltwo + C(\xi)t^2 \left\|\nabla_{qp}^2 g(t)\right\|_\Ltwo^2 \right) \\
& \quad + \frac{|\tau|\left(A(\xi) + B(\xi) \right)t}{2}\| \nabla_p^2 g(t) \|_\Ltwo^2 + \frac{|\tau|\left( B(\xi) + C(\xi)\right)t^3}{2} \| \nabla_{qp}^2 g(t) \|_\Ltwo^2. 
\end{aligned}
\]

The discussion at this stage follows the same strategy as the end of Section~\ref{sec:proof_H1_like}, by rewriting the sum of the second to the fourth lines in matrix form as
\[
-X^T S(\xi,\tau) X, \qquad X = \begin{pmatrix} \left\|\nabla_p g(t)\right\|_\Ltwo \\ \left\|\nabla_{q} g(t)\right\|_\Ltwo \end{pmatrix}, 
\]
and the sum of the last two lines in matrix form as 
\[
-Y^T T(\xi,\tau) Y, \qquad Y = \begin{pmatrix} \left\|\nabla_p^2 g(t)\right\|_\Ltwo \\ \left\|\nabla_{qp}^2 g(t)\right\|_\Ltwo \end{pmatrix}. 
\]
We then distinguish the cases $\xi \to 0$ and $\xi \to +\infty$, and look for conditions ensuring that the matrices $S(\xi,\tau),T(\xi,\tau)$ are nonnegative. It is easily seen that the requirements translate into $A(\xi) = \overline{A} \mx$, $B(\xi) = \overline{B} \mx$, $C(\xi) = \overline{C} \mx$ for positive parameters $\overline{A}, \overline{B}, \overline{C}$ sufficiently small and such that $\overline{A}\overline{C}-\overline{B}^2 > 0$; as well as $\tau \in [-\delta_*\min(\xi,1),\delta_*\min(\xi,1)]$ for $\delta_* > 0$ sufficiently small; and the further conditions that $\overline{C}$ and $\delta_*$ are sufficiently small compared to $\overline{B}$. There exists therefore $R > 0$ (independent of~$\tau$ and~$\xi$) such that
\[
\frac{dN_g(t)}{dt} \leq R |\tau| N_g(t),
\]
from which we deduce that $N_g(1) \leq \rme^{R |\tau|} N_g(0) = 2 \rme^{R |\tau|} \| g \|_\Ltwo^2$. The latter inequality allows to conclude.
\end{proof}

\subsection{Proof of Theorem~\ref{thm:L2_DMS}}
\label{sec:proof_L2_DMS} 

We closely follow the proof of~\cite{DMS09,DMS15}, however specializing the operators under consideration to the case of Langevin dynamics, and considering all operators on~$L^2(\mu)$. This allows to simplify several arguments in the proof.

Introduce the projection $\Pi : L^2(\mu) \to L^2(\nu)$ defined as
\[
(\Pi g)(q) = \langle g(q,\cdot), \mathbf{1}\rangle_{L^2(\kappa)} = \int_{\R^d} g(q,p)\, \kappa(dp).
\]
A simple computation shows that $\Pi \cLham \Pi = 0$ since $(\cLham \Pi g)(q,p) =  p^T \nabla_q (\Pi g)(q)/m$ and $\Pi p = 0$. Consider also the following functional
\[
\mathscr{E}(g) = \frac12 \| g \|_{L^2(\mu)}^2 + a(\xi) \la Ag, g \ra,
\]
for some parameter $a(\xi) \in (0,1)$ to be determined later on in terms of~$\xi$, and with
\[
A = -\left(1 - \Pi \cLham^2 \Pi \right)^{-1} \Pi \cLham.
\]
The latter operator can in fact be made somewhat more explicit, by computing the action of $\cLham^2 \Pi$:
\begin{equation}
  \label{eq:action_Ltwo}
  \cLham^2 \Pi \varphi = \frac{1}{m^2} p^T (\nabla_q^2 \Pi \varphi) p - \frac{1}{m} \nabla U \cdot \nabla_q \Pi \varphi.
\end{equation}
A simple computation then shows that $\Pi \cLham^2 \Pi$ is, up to a multiplicative factor $1/m$, the generator of the overdamped Langevin process~\eqref{eq:overdamped}:
\[
\Pi \cLham^2 \Pi \varphi = \frac{1}{m} \cLovd \Pi \varphi, \qquad \cLovd = - \nabla U \cdot \nabla_q + \frac1\beta \Delta_q\varphi. 
\]
The following result gathers some properties of the operator~$A$ (see \cite[Lemma~1]{DMS15}).

\begin{lemma}
\label{lem:estimates_A}
It holds $\Pi A = A$. Moreover, for any function $g \in L^2(\mu)$,
\[
\| A g \|_{L^2(\mu)} \leq \frac12 \| (1-\Pi)g \|_{L^2(\mu)}, 
\qquad 
\| \cLham A g \|_{L^2(\mu)} \leq \| (1-\Pi)g \|_{L^2(\mu)}. 
\]
\end{lemma}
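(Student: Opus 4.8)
The plan is to exploit the explicit formula $A = -(1-\Pi\cLham^2\Pi)^{-1}\Pi\cLham$ together with three structural facts: $\cLham$ is antisymmetric on $L^2(\mu)$, $\Pi$ is an orthogonal projection, and $\Pi\cLham\Pi = 0$. I would first note that $\Pi\cLham^2\Pi$ maps into $\mathrm{Ran}(\Pi)$ and vanishes on $\ker\Pi$, hence commutes with $\Pi$; moreover $\la \Pi\cLham^2\Pi f, f\ra = -\|\cLham\Pi f\|^2_\Ltwo \le 0$ by antisymmetry of $\cLham$, so $1-\Pi\cLham^2\Pi \ge 1$ is boundedly invertible with inverse again commuting with $\Pi$. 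Since $\Pi\cLham g\in\mathrm{Ran}(\Pi)$, this immediately gives $Ag\in\mathrm{Ran}(\Pi)$, i.e.\ $\Pi A = A$.

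For the two norm bounds I would set $u = Ag$, so that $u = \Pi u$ and $(1-\Pi\cLham^2\Pi) u = -\Pi\cLham g$. Taking the $L^2(\mu)$ scalar product with $u$ and using $\Pi u = u$, self-adjointness of $\Pi$ and antisymmetry of $\cLham$ — which turn $\la\Pi\cLham^2\Pi u,u\ra$ into $\la\cLham^2 u,u\ra = -\|\cLham u\|^2_\Ltwo$ and $\la\Pi\cLham g,u\ra$ into $\la\cLham g,u\ra = -\la g,\cLham u\ra$ — one is left with
\[
\|u\|^2_\Ltwo + \|\cLham u\|^2_\Ltwo = \la g, \cLham u\ra .
\]
The key observation is then that $\cLham u = \cLham\Pi u$ satisfies $\Pi\cLham u = \Pi\cLham\Pi u = 0$, so that $\la\Pi g,\cLham u\ra = \la g,\Pi\cLham u\ra = 0$ and the right-hand side above equals $\la(1-\Pi)g,\cLham u\ra$. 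A Cauchy--Schwarz inequality applied to this identity gives $\|\cLham u\|^2_\Ltwo \le \|u\|^2_\Ltwo+\|\cLham u\|^2_\Ltwo \le \|(1-\Pi)g\|_\Ltwo\|\cLham u\|_\Ltwo$, i.e.\ $\|\cLham Ag\|_\Ltwo \le \|(1-\Pi)g\|_\Ltwo$; while Young's inequality $\la(1-\Pi)g,\cLham u\ra \le \tfrac14\|(1-\Pi)g\|^2_\Ltwo+\|\cLham u\|^2_\Ltwo$ yields $\|Ag\|^2_\Ltwo = \|u\|^2_\Ltwo \le \tfrac14\|(1-\Pi)g\|^2_\Ltwo$.

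The main obstacle is not the algebra, which is short, but the functional-analytic bookkeeping: one must justify that $A$ is a bounded operator on all of $L^2(\mu)$ and that $u = Ag$ lies in the domain of $\cLham$, so that $\|\cLham u\|_\Ltwo$ and the integrations by parts above make sense even for $g\notin D(\cLham)$. The cleanest way to handle this is to use that $\Pi\cLham\Pi = 0$ implies $\Pi\cLham = \Pi\cLham(1-\Pi)$, so that with $T := \cLham\Pi$ (closed and densely defined, with $T^* = -\Pi\cLham$) one has $A = (1+T^*T)^{-1}T^* = T^*(1+TT^*)^{-1}$, the last expression being bounded and everywhere defined; the two estimates then alternatively follow from the elementary bounds $\sup_{s\ge0}s(1+s)^{-2} = \tfrac14$ and $\sup_{s\ge0}s(1+s)^{-1}\le1$ applied to the self-adjoint operator $S = TT^*\ge0$, together with $Ag = A(1-\Pi)g$ and $\cLham Ag = TAg = (1+TT^*)^{-1}TT^*(1-\Pi)g$. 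A lower-tech alternative is to establish the displayed identities first for $g$ smooth and compactly supported, where every step is manifestly legitimate, and then pass to the limit by density.
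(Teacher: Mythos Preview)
Your main argument is correct and essentially identical to the paper's own proof: set $u=Ag$, take the $L^2(\mu)$ inner product of $(1-\Pi\cLham^2\Pi)u=-\Pi\cLham g$ with $u$, use antisymmetry of $\cLham$ and $\Pi\cLham\Pi=0$ to rewrite both sides, and conclude via Cauchy--Schwarz and Young. You supply more detail than the paper (the explicit justification of $\Pi A=A$, the domain discussion, and the alternative $T^*(1+TT^*)^{-1}$ spectral-calculus route), but the core computation is the same.
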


\begin{proof}
  Consider $g \in \Ltwo$ and $u = Ag$. Then, $(1-\Pi \cLham^2 \Pi)u = -\Pi \cLham g$, so that, taking the scalar product with respect to~$u$, 
\[
\begin{aligned}
\|u\|_\Ltwo^2 + \|\cLham \Pi u\|_\Ltwo^2 & = \langle \cLham \Pi u, (1-\Pi)g \rangle_\Ltwo \\
& \leq \|\cLham \Pi u\|_\Ltwo \|(1-\Pi)g\|_\Ltwo \leq \frac14 \|(1-\Pi)g\|_\Ltwo^2 + \|\cLham \Pi u\|_\Ltwo^2,
\end{aligned}
\]
which gives the claimed result. 
\end{proof}

Denoting by $\sH(t) = \mathscr{E}(f(t))$ with $f(t)$ defined in~\eqref{eq:f(t)}, a simple computation shows that
\[
\begin{aligned}
\sH'(t) & = \langle f(t) , \cLxt f(t) \rangle_{L^2(\mu)} + a(\xi) \langle A\cLxt^* f(t) , f(t) \rangle_{L^2(\mu)} + a(\xi) \langle \cLxt A f(t) , f(t) \rangle_{L^2(\mu)} \\
& = \xi \langle f(t) , \cLFD f(t) \rangle_{L^2(\mu)} + \tau \langle f(t) , \cLpert f(t) \rangle_{L^2(\mu)} \\
& \ \ - a(\xi) \langle A \cLham f(t) , f(t) \rangle_{L^2(\mu)} + \xi a(\xi) \langle A \cLFD f(t) , f(t) \rangle_{L^2(\mu)} + \tau a(\xi) \langle A \cLpert^* f(t) , f(t) \rangle_{L^2(\mu)} \\
& \ \ + a(\xi) \langle \cLham A f(t) , f(t) \rangle_{L^2(\mu)},
\end{aligned}
\]
where we used in the last line that $\cLFD A = \cLFD \Pi A = 0$ and $\cLpert A = \cLpert \Pi A = 0$. Therefore,
\begin{equation}
\label{eq:derivee_sH}
\begin{aligned}
  \sH'(t) & \leq \xi \langle f(t) , \cLFD f(t) \rangle_{L^2(\mu)} + \tau \langle f(t) , \cLpert f(t) \rangle_{L^2(\mu)} \\
  & \ \ - a(\xi) \langle A \cLham f(t) , f(t) \rangle_{L^2(\mu)} + \xi a(\xi) \langle A \cLFD f(t) , f(t) \rangle_{L^2(\mu)} \\
& \ \ + |\tau| a(\xi) \| A \cLpert^* \| \| f(t) \|_{L^2(\mu)}\| \Pi f(t) \|_{L^2(\mu)} + a(\xi) \| (1-\Pi)f(t)\|_{L^2(\mu)}^2 ,
\end{aligned}
\end{equation}
where we used Lemma~\ref{lem:estimates_A} to write $\cLham A = \cLham \Pi A = (1-\Pi)\cLham A$, so that 
\[
\left|\langle \cLham A h , h \rangle_{L^2(\mu)}\right| = \left|\langle \cLham A h , (1-\Pi) h \rangle_{L^2(\mu)}\right| \leq \| (1-\Pi)h\|_{L^2(\mu)}^2.
\]
Let us now consider successively the various terms in~\eqref{eq:derivee_sH}:
\begin{itemize} 
\item The sum first two terms on the right-hand side of~\eqref{eq:derivee_sH} can be bounded as
\[
\begin{aligned}
\xi \langle f(t) , \cLFD f(t) \rangle_{L^2(\mu)} + \tau \langle f(t), \cLpert f(t) \rangle_{L^2(\mu)} & \leq -\frac{\xi}{\beta} \| \nabla_p f(t) \|_{L^2(\mu)}^2 + |\tau| \|f(t)\|_\Ltwo \| \nabla_p f(t)\|_\Ltwo \\
& \leq -\left(\frac{\xi}{\beta}-\frac{|\tau|}{2\eta}\right) \| \nabla_p f(t) \|_{L^2(\mu)}^2 + \frac{\eta|\tau|}{2} \|f(t)\|_\Ltwo^2, \\
& \leq -K_\kappa^2\left(\frac{\xi}{\beta}-\frac{|\tau|}{2\eta}\right) \| (1-\Pi)f(t) \|_{L^2(\mu)}^2 + \frac{\eta|\tau|}{2} \|f(t)\|_\Ltwo^2,
\end{aligned}
\]
where $\eta>\beta|\tau|/(2\xi)$, and where we used $\nabla_p \Pi = 0$ and a Poincar\'e inequality to obtain the last inequality (since $(1-\Pi)g(t,q,\cdot) \in L^2_0(\kappa)$ for almost all~$q$). 
\item We rewrite the first term of the second line of~\eqref{eq:derivee_sH} as 
\begin{equation}
\label{eq:ALham}
\langle A \cLham f(t) , f(t) \rangle_{L^2(\mu)} = \langle A \cLham \Pi f(t) , f(t) \rangle_{L^2(\mu)} + \langle A \cLham (1-\Pi) f(t) , f(t) \rangle_{L^2(\mu)}.
\end{equation}
We start with the first term on the right-hand side of the above equality. Denoting by $B = \cLham \Pi$, it holds $(B h)(q,p) = p^T \nabla_q (\Pi h)(q)/m$. When $h \in L^2_0(\mu)$, and since $\nu$ satisfies the Poincar\'e inequality~\eqref{eq:Poincare_nu}, it holds
\[
\| Bh \|_{L^2(\mu)} = \sqrt{\frac{d}{m\beta}} \| \nabla_q (\Pi h) \|_{L^2(\nu)} \geq K_\nu \sqrt{\frac{d}{m\beta}} \| \Pi h \|_{L^2(\nu)}.
\]
This can be rephrased as
\[
B^* B \geq \frac{d K_\nu^2}{m\beta} \Pi
\]
in the sense of symmetric operators on $L^2_0(\mu)$. Since $A\cLham \Pi = (1+B^*B)^{-1} B^* B = 1 - (1+B^*B)^{-1}$, we can conclude that
\[
- \langle A \cLham \Pi f(t) , f(t) \rangle_{L^2(\mu)} \leq -\frac{d K_\nu^2/(m\beta)}{1 + d K_\nu^2/(m\beta)} \| \Pi f(t) \|_{L^2(\mu)}^2.
\]
For the second term on the right-hand side of~\eqref{eq:ALham}, we write (using $\Pi A = A$)
\[
\langle A \cLham (1-\Pi) f(t) , f(t) \rangle_{L^2(\mu)} = -\langle (1-\Pi)f(t) , \cLham A^* \Pi f(t) \rangle_{L^2(\mu)}. 
\]
By Lemma~\ref{lem:Lham_A*} below, the operator $\cLham A^*$ is bounded, so that the absolute value of the right-hand side of the above equality is bounded by $\| \cLham A^* \| \| (1-\Pi)f(t)\|_{L^2(\mu)} \| \Pi f(t)\|_{L^2(\mu)}$.
\item In order to treat the second term in the second line of~\eqref{eq:derivee_sH}, we compute the action of $A \cLFD$. Now, $A\cLFD = -(1-\Pi \cLham^2\Pi)^{-1} \Pi \cLham \cLFD = -(1-\Pi \cLham^2\Pi)^{-1} \Pi [\cLham,\cLFD]$ since $\Pi \cLFD = 0$. In order to evaluate the commutator, we compute
\[
[\cLham,\cLFD]h = \frac{1}{m} \nabla U \cdot \nabla_p h + \frac{p}{m^2}\cdot \nabla_q h - \frac{2}{m\beta}\sum_{i=1}^d \partial_{q_i p_i}^2 h. 
\]
Upon applying $\Pi$ to the various terms and noting that 
\[
\int_{\R^d} \partial_{q_i p_i}^2 h \, d\kappa = \frac{\beta}{m} \int_{\R^d} p_i \partial_{q_i}h \, d\kappa,
\]
it follows that $\Pi[\cLham,\cLFD] = -\Pi\cLham/m$. Therefore, $A\cLFD = -A/m$ is bounded by Lemma~\ref{lem:estimates_A}. More precisely, 
\[
\left| \langle A \cLFD f(t) , f(t) \rangle_{L^2(\mu)} \right| \leq \frac{1}{2m} \| (1-\Pi)f(t)\|_{L^2(\mu)} \| \Pi f(t)\|_{L^2(\mu)}.
\]
\item Finally, $A \cLpert^*$ is bounded by Lemma~\ref{lem:Lham_A*} below, with $\|A \cLpert^*\| \leq \sqrt{\beta/(4m)}$.
\end{itemize}

Gathering all estimates, we obtain 
\[
\sH'(t) \leq -X(t)^T \Big( S(\xi) - |\tau|T(\xi) \Big) X(t),
\]
with
\[
X = \begin{pmatrix} \| \Pi f(t)\|_{L^2(\mu)} \\ \| (1-\Pi) f(t)\|_{L^2(\mu)} \end{pmatrix}, 
\qquad 
S(\xi) = \begin{pmatrix} S_{--}(\xi) & S_{-+}(\xi)/2 \\ S_{-+}(\xi)/2 & S_{++}(\xi) \end{pmatrix},
\qquad
T(\xi) = \begin{pmatrix} T_{--}(\xi) & T_{-+}(\xi)/2 \\ T_{-+}(\xi)/2 & T_{++}(\xi) \end{pmatrix},
\]
where
\[
\begin{cases}
S_{--}(\xi) \dps = a(\xi) \frac{d K_\nu^2/(m\beta)}{1 + d K_\nu^2/(m\beta)},\\[10pt]
S_{-+}(\xi) \dps = - a(\xi) \left( \| \cLham A^* \| + \frac{\xi}{2m} \right),\\[10pt] 
S_{++}(\xi) \dps = \frac{\xi K^2_\kappa}{\beta} - a(\xi),\\ 
\end{cases} 
\qquad 
\begin{cases}
T_{--}(\xi) \dps = \frac12 \left( a(\xi)\sqrt{\frac{\beta}{m}} + \eta\right),\\[10pt] 
T_{-+}(\xi) \dps = \frac{a(\xi)}{2}\sqrt{\frac{\beta}{m}},\\[10pt] 
T_{++}(\xi) \dps = \frac12 \left( \eta+\frac{K_\kappa^2}{\eta}\right).\\ 
\end{cases}. 
\]
Therefore, since $2\sH(t) \leq (1+a(\xi))\|f(t)\|_{L^2(\mu)}^2$ by Lemma~\ref{lem:estimates_A},
\begin{equation}
  \label{eq:key_estimate_DMS}
  \sH'(t) \leq -\Lambda_-\Big(S(\xi)-|\tau|T(\xi)\Big) \|f(t)\|^2_{L^2(\mu)} \leq - \frac{2\Lambda_-\Big(S(\xi)-|\tau|T(\xi)\Big)}{1+a(\xi)}\sH(t),
\end{equation}
where $\Lambda_-$ is defined in~\eqref{eq:smallest_eig}. We next follow a discussion similar to the one performed at the end of Section~\ref{sec:proof_H1_like}. When $\tau = 0$, the requirement that $S(\xi)$ is positive definite shows that $a(\xi)$ should be chosen of order~$\mx$. Next, in order for $S(\xi)-|\tau|T(\xi)$ to be positive definite for $\tau \neq 0$, we see that $\tau$ should be of order~$\xi$ when $\xi \to 0$ (choosing $\eta = \beta\tau/(4\xi)$ for instance), but can be taken to be of order~1 as $\xi \to +\infty$ by setting $\eta = 1/\xi$. Indeed, this choice leads to 
\[
S(\xi)-|\tau|T(\xi) \simeq \begin{pmatrix} \dps \frac{\overline{a}_\infty}{\xi} \frac{d K_\nu^2/(m\beta)}{1 + d K_\nu^2/(m\beta)}& \dps -\frac{\overline{a}_\infty}{4m}\\ \dps -\frac{\overline{a}_\infty}{4m} & \dps \frac{\xi K^2_\kappa}{\beta}\end{pmatrix} - \frac{|\tau|}{2} \begin{pmatrix} \dps \frac{1 + \overline{a}_\infty \sqrt{\beta/m}}{\xi} & 0 \\ 0 & \dps \frac{\xi K_\kappa^2}{2}\end{pmatrix},
\]
where $\overline{a}_\infty$, the limit of $\xi a(\xi)$ as $\xi \to +\infty$, should be sufficiently small. In conclusion, it is possible to set $\tau = \delta \min(\xi,1)$ for $|\delta|$ sufficiently small, and, for such a choice, there exists $\overline{\lambda}_\delta >0$ for which, for any $\xi \in (0,+\infty)$,
\[
\sH'(t) \leq -\overline{\lambda}_\delta \mx \| f(t) \|_{L^2(\mu)}^2.
\]
Moreover, $\overline{\lambda}_\delta = \overline{\lambda}_0 + \mathrm{O}(\delta)$. 

The final result is obtained by noting that, in view of Lemma~\ref{lem:estimates_A}, and considering functions $a$ with values in a compact subset of~$(0,1)$,
\[
\| f(t) \|_{L^2(\mu)}^2 \leq \frac{2}{1-a(\xi)} \sH(t), \qquad \sH(0) \leq \frac{1+a(\xi)}{2} \| f(0) \|^2_\Ltwo.
\]
This shows that the constant $C$ in~\eqref{eq:cv_L2_like_DMS} can be chosen as (by restricting $a(\xi)$ to remain lower than 1/2 for instance)
\[
C = \sup_{\xi > 0} \sqrt{\frac{1+ a(\xi)}{1 - a(\xi)}}.
\]

\medskip

It remains to prove the following technical result.
\begin{lemma}
  \label{lem:Lham_A*}
  The operators $\cLham A^* = \cLham^2 \Pi (1-\Pi \cLham^2 \Pi)^{-1}$ and $\cLpert A^*$ are bounded. Moreover, $\|\cLpert A^*\| \leq \sqrt{\beta/(4m)}$.  
\end{lemma}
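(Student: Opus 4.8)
The plan is to first make $A^\ast$ explicit, then treat the two operators separately: the estimate on $\cLpert A^\ast$ will be a short explicit computation with a tractable constant, whereas the boundedness of $\cLham A^\ast$ will rest on elliptic regularity for the overdamped generator, which I expect to be the only delicate point. Since $\Pi$ is the orthogonal projection onto functions of~$q$ (the conditional expectation with respect to the momenta), one has $\Pi^\ast=\Pi$, and since $\cLham^\ast=-\cLham$ the operator $\cLham^2$, hence $\Pi\cLham^2\Pi$ and its resolvent $(1-\Pi\cLham^2\Pi)^{-1}$, is symmetric on $\Ltwo$. Taking adjoints in $A=-(1-\Pi\cLham^2\Pi)^{-1}\Pi\cLham$ then yields $A^\ast=\cLham\,\Pi\,(1-\Pi\cLham^2\Pi)^{-1}$, so that $\cLham A^\ast=\cLham^2\Pi(1-\Pi\cLham^2\Pi)^{-1}$. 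Moreover $\Pi A=A$ gives $A^\ast=A^\ast\Pi$, so it suffices to bound both operators applied to a function $\varphi=\Pi g\in L^2(\nu)$; I set $u=(1-\Pi\cLham^2\Pi)^{-1}\varphi$, a function of~$q$ alone with $\Pi u=u$ (since $1-\Pi\cLham^2\Pi$ commutes with $\Pi$). All computations below are for smooth $g$, extended by density.

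For $\cLpert A^\ast$: as $u=u(q)$, one has $\cLham u=\frac{p}{m}\cdot\nabla_q u$ and hence $\cLpert\cLham u=F\cdot\nabla_p\!\left(\frac{p}{m}\cdot\nabla_q u\right)=\frac{1}{m}F\cdot\nabla_q u$, i.e. $\cLpert A^\ast g=\frac{1}{m}F\cdot\nabla_q u$. Testing $(1-\Pi\cLham^2\Pi)u=\varphi$ against $u$ in $\Ltwo$ — exactly as in the proof of Lemma~\ref{lem:estimates_A}, using $\Pi u=u$ and the antisymmetry of $\cLham$ — gives $\|u\|_\Ltwo^2+\|\cLham u\|_\Ltwo^2=\la\varphi,u\ra\le\|\varphi\|_\Ltwo\|u\|_\Ltwo$, whence $\|\cLham u\|_\Ltwo^2\le\|\varphi\|_\Ltwo\|u\|_\Ltwo-\|u\|_\Ltwo^2\le\frac14\|\varphi\|_\Ltwo^2$. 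Since the momentum marginal $\kappa$ has covariance $(m/\beta)\Id$, a direct computation gives $\|\cLham u\|_\Ltwo^2=(m\beta)^{-1}\|\nabla_q u\|_{L^2(\nu)}^2$; combining with $|F|=1$ and $\|\Pi g\|_\Ltwo\le\|g\|_\Ltwo$ then yields
\[
\|\cLpert A^\ast g\|_\Ltwo=\frac{1}{m}\|F\cdot\nabla_q u\|_{L^2(\nu)}\le\frac{1}{m}\|\nabla_q u\|_{L^2(\nu)}\le\sqrt{\frac{\beta}{4m}}\,\|g\|_\Ltwo,
\]
which is the claimed bound.

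For $\cLham A^\ast$: using $\Pi\cLham^2\Pi=m^{-1}\cLovd\Pi$, the relation defining $u$ becomes $u-m^{-1}\cLovd u=\varphi$, that is $-\beta^{-1}\Delta_q u+\nabla U\cdot\nabla_q u+mu=m\varphi$ on the compact torus $\cD$. This is a uniformly elliptic second-order equation with smooth coefficients, invertible because $\cLovd\le0$ on $L^2(\nu)$; by classical elliptic $L^2$-regularity — together with the equivalence of the $\nu$-weighted and Lebesgue Sobolev norms, $\rme^{-\beta U}$ being bounded above and below on $\cD$ — one obtains $\|u\|_{H^2(\nu)}\le C\|\varphi\|_{L^2(\nu)}$ for a constant $C$ depending only on $U,\beta,m$. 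Inserting this into the explicit expression $\cLham^2 u=m^{-2}\,p^T(\nabla_q^2 u)\,p-m^{-1}\nabla U\cdot\nabla_q u$ coming from~\eqref{eq:action_Ltwo}, and using that $\kappa$ has finite fourth moments, one bounds $\|\cLham A^\ast g\|_\Ltwo=\|\cLham^2 u\|_\Ltwo\le C'\!\left(\|\nabla_q^2 u\|_{L^2(\nu)}+\|\nabla U\|_{L^\infty}\|\nabla_q u\|_{L^2(\nu)}\right)\le C''\|g\|_\Ltwo$. The one genuine difficulty, as anticipated, is this two-derivative gain for $(1-m^{-1}\cLovd)^{-1}$; it is however entirely standard, and on the flat torus can even be obtained by the elementary integration-by-parts identity bounding $\|\nabla_q^2 u\|_{L^2(\nu)}$ in terms of $\|\Delta_q u\|_{L^2(\nu)}$ and $\|\nabla_q u\|_{L^2(\nu)}$, both of which are already controlled by $\|\varphi\|_{L^2(\nu)}$ through the equation $\cLovd u=m(u-\varphi)$ and the energy estimate above.
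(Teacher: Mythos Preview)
Your proof is correct. For $\cLham A^\ast$ you follow the same route as the paper: both arguments reduce to the $H^2(\nu)$-regularity of $(1-m^{-1}\cLovd)^{-1}$, which is standard elliptic regularity on the torus, and then use the explicit expression~\eqref{eq:action_Ltwo}.

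For $\cLpert A^\ast$ your argument is genuinely different from the paper's, and arguably more elementary. The paper writes $\|\cLpert A^\ast\varphi\|^2$ as a quadratic form in $(1-m^{-1}\cLovd)^{-1}\Pi\varphi$, uses the operator inequality $(F\cdot\nabla_q)^\ast(F\cdot\nabla_q)\le -\beta\cLovd$, and then appeals to spectral calculus to bound $\sup_{x\ge 0}\beta x\,m^{-2}(1+x/m)^{-2}=\beta/(4m)$. You instead recycle the energy identity from Lemma~\ref{lem:estimates_A}: testing $(1-\Pi\cLham^2\Pi)u=\varphi$ against $u$ gives $\|\cLham u\|^2\le\|\varphi\|\|u\|-\|u\|^2\le\tfrac14\|\varphi\|^2$, and since $\|\cLham u\|^2=(m\beta)^{-1}\|\nabla_q u\|_{L^2(\nu)}^2$ this immediately yields the same sharp constant $\sqrt{\beta/(4m)}$. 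Your route avoids spectral calculus entirely and stays at the level of a single integration by parts; the paper's route has the advantage of making transparent why the constant is optimal (the supremum in $x$ is attained).
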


\begin{proof}
In view of~\eqref{eq:action_Ltwo}, the action of $\cLham A^* = \cLham A^* \Pi$ is
\[
\cLham A^* \Pi \varphi(q,p) = \frac{1}{m^2} p^T (\nabla_q^2 \Pi \psi)(q) p - \frac{1}{m} \nabla U(q) \cdot \nabla_q \Pi \psi(q),
\]
with
\[
\psi = \left(1 - \frac1m \cLovd \right)^{-1} \Pi \varphi.
\]
The operator $\cLham A^* \Pi$ is then bounded since $1 - \cLovd/m$ is a bounded operator from $L^2(\nu)$ to $H^2(\nu)$. 

We next consider $\cLpert A^* = \cLpert \cLham \Pi \left(1 - \cLovd/m\right)^{-1} \Pi$. Since
\[
\cLpert \cLham \Pi h = \frac{1}{m} \cLpert\left(p \cdot \nabla_q \Pi h \right) = \frac1m F \cdot \nabla_q \Pi h,
\]
the operator $\cLpert A^*$ is also bounded since $1 - \cLovd/m$ is a bounded operator from $L^2(\nu)$ to $H^1(\nu)$. Moreover, for $\varphi \in \Ltwo$,
\[
\begin{aligned}
\| \cLpert A^* \varphi \|^2 & = \langle \varphi, A \cLpert^* \cLpert A^* \varphi \rangle_\Ltwo \\
& = \frac{1}{m^2} \left\langle \varphi, \Pi(1-\cLovd/m)^{-1} (F\cdot \nabla_q)^* F\cdot \nabla_q (1-\cLovd/m)^{-1} \Pi \varphi \right\rangle_\Ltwo. 
\end{aligned}
\]
Since $0 \leq (F\cdot \nabla_q)^* F\cdot \nabla_q \leq \nabla_q^* \nabla_q = -\beta \cLovd$ in the sense of symmetric operators, 
it follows that, by spectral calculus,
\[
\| \cLpert A^* \|^2 \leq \sup_{x \geq 0} \frac{\beta x}{m^2(1+x/m)^2} = \frac{\beta}{m}\sup_{y \geq 0} \frac{y}{(1+y)^2} = \frac{\beta}{4m},
\]
which gives the claimed bound. 
\end{proof}

\subsection*{Acknowledgements}

We thank Anton Arnold for interesting discussions. The work of G.S. is supported by the Agence Nationale de la Recherche, under grant ANR-14-CE23-0012 (COSMOS) and by the European Research Council under the European Union's Seventh Framework Programme (FP/2007-2013) / ERC Grant Agreement number 614492. The work of S.O. is supported by the Agence Nationale de la Recherche, under grant ANR-15-CE40-0020-01 (LSD). 


\end{document}